\documentclass[11pt,leqno,twoside]{article}

\usepackage[latin1]{inputenc}
\usepackage{amssymb}
\usepackage{amsthm}
\usepackage{bbm,bm}
\usepackage{amsmath}
\usepackage [italian, english]{babel}
\usepackage{algorithm}
\usepackage[noend]{algpseudocode}
\usepackage{graphicx}
\usepackage{booktabs}
\usepackage{caption}
\usepackage{verbatim}

\usepackage[numbers]{natbib}
\setcitestyle{square,aysep={},yysep={,}}

\usepackage[usenames,dvipsnames]{xcolor}
\usepackage{tikz}
\usepackage{tkz-graph}
\usepackage{subfig}
\usepackage{dsfont,fancyhdr}

\setlength{\topmargin}{0mm}
\setlength{\evensidemargin}{4mm}
\setlength{\oddsidemargin}{4mm} 
\setlength{\textwidth}{15.7cm} 
\setlength{\textheight}{20cm}
\parskip=0mm
\allowdisplaybreaks

\graphicspath{{figures/}}
\numberwithin{equation}{section}
\theoremstyle{plain}


\setlength{\bibsep}{1mm} 

\long\def\symbolfootnote[#1]#2{\begingroup\def\thefootnote{\hspace*{-1mm}\fnsymbol{footnote}}\footnote[#1]{#2}\endgroup}


\theoremstyle{definition}
\newtheorem{defin}{Definition}
\theoremstyle{plain}

\theoremstyle{plain}
\newtheorem{prop}{Proposition}
\theoremstyle{plain}

\theoremstyle{plain}
\newtheorem{lemma}{Lemma}
\theoremstyle{plain}

\theoremstyle{plain}


\title{\vspace{-15mm}\bf  
Predictive inference with Fleming--Viot-driven dependent Dirichlet processes
}
\author{
\textsc{Filippo Ascolani}\\
\emph{Bocconi University}\\[2mm]
\textsc{Antonio Lijoi}\\
\emph{Bocconi University and BIDSA}\\[2mm]
\textsc{Matteo Ruggiero}\\
\emph{University of Torino and Collegio Carlo Alberto}
}
\date{\today}



\pagestyle{fancy}
\fancyhf{}
\fancyhead[LE,RO]{\thepage} 
\fancyhead[RE]{\textsc{Ascolani, Lijoi and Ruggiero}} 
\fancyhead[LO]{\textit{Predictive inference with FV-DDPs}} 

\addtolength{\headsep}{10pt}
\addtolength{\headheight}{2pt}
\fancypagestyle{plain}{
\fancyhead{}
}

\usepackage[bookmarks,bookmarksnumbered,colorlinks=true,pdfstartview=FitV, linkcolor=red,citecolor=blue!90!black,urlcolor=blue!90!black]{hyperref}

\bibpunct{\textcolor{blue!90!black}{(}}{\textcolor{blue!90!black}{)}}{\textcolor{blue!90!black}{,}}{a}{\textcolor{blue!90!black}{,}}{\textcolor{blue!90!black}{;}}

\def \d {\mathrm{d}}	
\def\B{\mathcal{B}}

\def\L{\mathcal{L}}
\def\E{\mathbb{E}}
\def\P{\mathbb{P}}
\def\Y{\mathcal{Y}}
\def\Z{\mathcal{Z}}
\def \d {\mathrm{d}}	
\def \mm {\mathbf{m}}	
\def \MM {\mathbf{M}}	
\def \nn {\mathbf{n}}

\def \ee {\mathbf{e}}	
\def \oo {\mathbf{0}}	
\def \YY {\mathbf{Y}}	
\def \dist {K}	

\newcommand\indep{\protect\mathpalette{\protect\independenT}{\perp}} 
\def\independenT#1#2{\mathrel{\rlap{$#1#2$}\mkern4mu{#1#2}}}

\begin{document}

\maketitle
\thispagestyle{empty}

\begin{abstract}
 We consider predictive inference using a class of temporally dependent Dirichlet processes driven by Fleming--Viot diffusions, which have a natural bearing in Bayesian nonparametrics and lend the resulting family of random probability measures to analytical posterior analysis. 
Formulating the implied statistical model as a hidden Markov model, we fully describe the predictive distribution induced by these Fleming--Viot-driven dependent Dirichlet processes, for a sequence of observations collected at a certain time given another set of draws collected at several previous times. This is identified as a mixture of P\'olya urns, whereby the observations can be values from the baseline distribution or copies of previous draws collected at the same time as in the usual P\`olya urn, or can be sampled from a random subset of the data collected at previous times. We characterise the time-dependent weights of the mixture which select such subsets and discuss the asymptotic regimes. We describe the induced partition by means of a Chinese restaurant process metaphor with a \emph{conveyor belt}, whereby new customers who do not sit at an occupied table open a new table by picking a dish either from the baseline distribution or from a time-varying offer available on the conveyor belt. We lay out explicit algorithms for exact and approximate posterior sampling of both observations and partitions, and illustrate our results on predictive problems with synthetic and real data.
\medskip

\noindent \textit{Key words and phrases}: 
{Chinese restaurant}, 
{conveyor belt}, 
{random partition}, 
{hidden Markov model}, 
{generalized P\'olya urn}, 
{predictive distribution}.\medskip

\noindent\textit{AMS 2010 subject classifications}: Primary: 62F15; secondary: 62G25, {62M20}.
\end{abstract}


\newpage
\section{Introduction and summary of results}

Bayesian nonparametric methodology has undergone a tremendous development in the last decades, often standing out among competitors for flexibility, interpretability and computational convenience. See for example \cite{HHMW,
MQJH15,bGvdV}.
The cornerstone of Bayesian nonparametrics is the sampling model based on the {Dirichlet process} \citep{F73}, whereby
\begin{equation}\label{DPmodel}
\begin{aligned}
& Y_{ i} \mid X = x \overset{\text{iid}}{\sim} x,\quad \quad 
 X \sim \Pi_{\alpha}.
\end{aligned}
\end{equation}
Here, given a sampling space $\Y$, we use $X$ to denote a random probability measure (RPM) on $\Y$, and observations $Y_{i}$ are assumed to be independent with distribution $x$ when $X=x$. 
We also denote by $\Pi_{\alpha}$ the distribution $X$ induces on the space $\mathcal{P}(\Y)$ of probability measures on $\Y$, with $\alpha=\theta P_0$, $\theta > 0$ and $P_{0}$ a nonatomic probability measure on $\Y$. Notable properties of the {Dirichlet process} are its large weak support and conjugacy, whereby the conditional RPM $X$, given observations $Y_{1},\ldots,Y_{n}$ from \eqref{DPmodel}, is still a Dirichlet process with updated parameter $\alpha+\sum_{i=1}^{n}\delta_{Y_{i}}$.

The great appeal offered by the relative simplicity of the {Dirichlet process} boosted a number of extensions, among which some of the most successful are mixtures of Dirichlet processes \citep{A74}, Dirichlet process mixtures \citep{L84}, P\'olya trees \citep{MSW92,L92}, Pitman--Yor processes \citep{PPY92,PY97}, Gibbs-type random measures \citep{GP05,TPAMI}, normalised random measures with independent increments \citep{RLP03,LMP05,LMP07}, to mention a few.
The common thread linking all the above developments is the assumption of exchangeability of the data, equivalent to the conditional independence and identity in distribution in \eqref{DPmodel} 
by virtue of de Finetti's Theorem. This can be restrictive when modelling data   that are known to be generated from partially inhomogeneous sources, as for example in time series modelling or when the data are collected in subpopulations. Such framework can be accommodated by \emph{partial exchangeability}, a weaker type of dependence whereby observations in two or more groups of data are exchangeable within each group but not overall. If groups are identified by a covariate value $z\in \Z$, then observations are exchangeable only if their covariates have the same value. 

One of the most active lines of research in Bayesian nonparametrics in recent years aims at extending the basic paradigm \eqref{DPmodel} to this more general framework. Besides pioneering contributions, recent progresses have stemmed from \cite{ME99}, who called a collection of RPMs $\{X_z, z \in \Z \}$ indexed by a finite-dimensional measurement $z\in \Z$ a \emph{dependent Dirichlet process} (DDP) if each marginal measure $X_{z}$ is a {Dirichlet process} with parameter that depends on $z$.

Here we focus on DDPs with temporal dependence, and replace  $z$ with $t\in [0,\infty)$ representing time. Previous contributions in this framework include {\cite{D06,CDD07,RT08,GS10,CT12,MR16,Cea17,GMR16,CR16,KKK20a}}. Many proposals in this area start from the celebrated stick-breaking representation of the {Dirichlet process} \citep{S94}, whereby $X$ in \eqref{DPmodel} is such that
\begin{equation}\label{SB}
X \overset{d}{=} \sum_{i \geq 0} V_{i}\prod_{j=1}^{i-1}(1-V_{j}) \, \delta_{Y_i}, 
\quad \quad 
V_{i}\overset{\text{iid}}{\sim}\text{Beta}(1,\theta),
\quad \quad 
Y_{i}\overset{\text{iid}}{\sim}P_{0},
\end{equation} 
and the temporal dependence is induced by letting each $V_i$ and/or $Y_i$ depend on time in a way that preserves the marginal distributions.
This approach has many advantages, among which: simplicity and versatility, since inducing dynamics on $V_{i}$ or $Y_{i}$ allows for a variety of solutions;  flexibility, since under mild conditions the resulting processes have large support (cf.~\citealp{BJQ12}); ease of implementation, since strategies for posterior computation based on MCMC sampling are readily available. However, the stick-breaking structure makes the analytical derivation of further posterior information, like for example characterizing the predictive distribution of the observations, often a daunting task. This typically holds for other approaches to temporal Bayesian nonparametric modelling as well. 
Determining explicitly such quantities would not only give a deeper insight into the model posterior properties, which otherwise remain obscure to a large extent, but also provide a further tool for direct application or as a building block in more involved dependent models, whose computational efficiency would benefit from an explicit computation.

In this paper, we provide analytical results related to the posterior predictive distribution of the observations induced by class of temporal DDP models driven by Fleming--Viot processes. The latter are a class of diffusion processes whose marginal values are Dirichlet processes. The continuous time dependence is a distinctive feature of our proposal, compared to the bulk of literature in the area. In particular, here we complement previous work done in \cite{PRS16}, which focussed on identifying the laws of the dependent RPMs involved, by investigating the distributional properties of future observations, characterized as a mixture of P\'olya urn schemes, and those of the induced partitions.

More specifically, in Section \ref{sec: model} we detail the statistical model we adopt, which directly extends \eqref{DPmodel} by assuming a hidden Markov model structure whereby observations are conditionally \emph{iid}  given the marginal value of a Fleming--Viot-driven DDP. We recall some key properties of this model, and include a new result on the weak support of the induced prior.
In Section \ref{sec: results} we present our main results. Conditioning on samples, with possibly different sizes, collected at $p$ times $0 = t_0 < \dots < t_{p-1} = T$, possibly in different amount at different times, we characterise the predictive distribution of a further sequence drawn at time $T+t$. This task can be seen as a dynamic counterpart of obtaining the predictive distribution of $Y_{k+1}|Y_{1},\ldots,Y_{k}$ for any $k\ge1$ in \eqref{DPmodel}, when the RPM $X$ is integrated out, thus yielding
\begin{equation}\label{DPpredictive}
\P(Y_{k+1}\in A| Y_{1},\ldots,Y_{k}) = \frac{\theta}{\theta+k}P_0(A) + \frac{k}{\theta+k} P_k(A),
\end{equation}
for any Borel set $A$ of $\Y$, where $P_{k}$ denotes the empirical distribution of $(Y_1, \dots , Y_k)$. 
In the hidden Markov model framework, we identify the predictive distribution of the DDP at time $T+t$ as a time-dependent mixture of P\'olya urn schemes. This can be thought of as being generated by a latent variable which selects a random subset of the data collected at previous times, whereby every component of the mixture is a classical posterior P\`olya urn conditioned to a different subset of the past data. We characterize the mixture weights, where the temporal dependence arises, and derive an explicit expression for the correlation between observations at different time points. Furthermore, we discuss two asymptotic regimes of the predictive distribution -- as the time index diverges, which recovers \eqref{DPpredictive}, and as the current sample size diverges, which links the sequence at time $T+t$ with its de Finetti measure -- and lay out explicit algorithms for exact and approximate sampling from the predictive. Next, we discuss the induced partition at time $T+t$ and derive an algorithm for sampling from its distribution. The partition sampling process is interpreted as a \emph{Chinese restaurant with conveyor belt}, whereby arriving customers who do not sit at an already occupied table, open a new table by choosing a dish either from the baseline distribution $P_{0}$ or from a temporally dependent selection of dishes that run through the restaurant on a conveyor belt, which in turn depends on past dishes popularity. We defer all proofs to the Supplementary Material. 
Finally, Section \ref{sec: illustration} illustrates the use of our results for predictive inference through synthetic data and through a dataset on the Karnofsky score related to a Hodgkins lymphoma study.


\section{Fleming--Viot dependent Dirichlet processes}\label{sec: model}

We consider a class of dependent Dirichlet processes with continuous temporal covariate. Instead of inducing the temporal dependence through the building blocks of the stick-breaking representation \eqref{SB}, we let the dynamics of the dependent process be driven by a Fleming--Viot (FV) diffusion. 
FV processes have been extensively studied in relation to population genetics (see \cite{EK93} for a review), while their role in Bayesian nonparametrics was first pointed out in \cite{Wea07} (see also \citealp{FRW09}). 
A loose but intuitive way of thinking a FV diffusion is of being composed by infinitely-many probability masses, associated to different locations in the sampling space $\Y$, each behaving like a diffusion in the interval $[0,1]$, under the overall constraint that the masses sum up to 1. In addition, locations whose masses touch 0 are removed, while new locations are inserted at a rate which depends on a parameter $\theta>0$. As a consequence, the random measures $X_{t}$ and $X_{s}$, with $t\ne s$, will share some, though not all, their support points.

The transition function that characterizes a FV process admits the following natural interpretation in Bayesian nonparametrics (cf.~\citealp{Wea07}). Initiate the process at the RPM $X_{0}\sim \Pi_{\alpha}$, and denote by $D_{t}$ a time-indexed latent variable taking values in $\mathbb{Z}_{+}$. Conditional on $D_{t}=m\in \mathbb{Z}_{+}$, the value of the process at time $t$ is a posterior DP $X_{t}$ with law
\begin{equation}\label{FV transition given latent}
X_{t}\mid (D_{t}=m,Y_{1},\ldots,Y_{m})\,\sim \, \Pi_{\alpha+ \sum_{i = 1}^m\delta_{Y_i}}
\quad \quad 
Y_i\mid X_{0}\overset{\text{iid}}{\sim }X_{0}.
\end{equation} 
Here, the realisation of the latent variable $D_{t}$ determines how many atoms $m$ are drawn from the initial state $X_{0}$, to become atoms of the posterior Dirichlet from which the arrival state is drawn. Such $D_{t}$ is a pure-death process, which starts at infinity with probability one and jumps from state $m$ to state $m-1$ after an exponentially distributed waiting time with inhomogenous parameter $\lambda_m = m(\theta+m-1)/2$. The transition probabilities of $D_{t}$ have been computed by \cite{G80,T84}, and in particular
\begin{equation}\label{D_t}
\P(D_{t}=m\mid D_{0}=\infty)=d_{m}(t)
\end{equation} 
where
\begin{equation}\nonumber
d_{m}(t)=\sum_{k=m}^{\infty}e^{-\lambda_{k}t}(-1)^{k-m}\frac{(\theta+2k-1)(\theta+m)_{(k)}}{m!(k-m)!},
\end{equation} 
$\lambda_{k}=k(\theta+k-1)/2$ and 
where $\theta_{(k)}=\theta(\theta-1)\cdots(\theta-k+1)$ is the Pochhammer symbol. Here the fact that $D_{0}=\infty$ almost surely should be understood as an entrance boundary, i.e., the process decreases from infinity at infinite speed so that at each $t>0$ the value of $D_{t}$ is finite.
The unconditional transition of the FV process is thus obtained by integrating $D_{t},Y_{1},\ldots,Y_{D_{t}}$ out of \eqref{FV transition given latent}, leading to
\begin{equation}\label{transition}
P_t(x,\d x') = \sum_{m = 0}^{\infty} d_m(t) \int_{\Y^m}\Pi_{\alpha + \sum_{i = 1}^m\delta_{y_i}}(\d x')x(\d y_1)\cdots x(\d y_m).
\end{equation}
This was first found by \cite{EG93}. 
It is known that $\Pi_{\alpha}$ is the invariant measure of $P_{t}$ if $X_{0}\sim \Pi_{\alpha}$, i.e., if the initial distribution is Dirichlet, in which case all marginal RPMs $X_{t}$ are Dirichlet processes with the same parameter.
In particular, the death process $D_{t}$ determines the correlation between RPMs at different times. Indeed, a larger $t$ implies a lower $m$ with higher probability, hence a decreasing (on average) number of support points will be shared by the random measures $X_{0}$ and $X_{t}$ when $t$ increases. On the contrary, as $t\rightarrow 0$ we have $D_{t}\rightarrow \infty$, which in turn implies infinitely-many atoms shared by $X_{0}$ and $X_{t}$, until the two RPMs eventually coincide.
See \cite{LRS16} for further discussion.

For definiteness, we formalize the following definition.

\begin{defin}\label{fv_def}
A Markov process $\{X_t\}_{t \geq 0}$ taking values in the space of atomic measures on $\Y$ is a \emph{Fleming--Viot dependent Dirichlet process} with parameter $\alpha$, denoted $X_{t}\sim\text{FV-DDP}(\alpha)$,  if $X_0 \sim \Pi_{\alpha}$ and its transition function is \eqref{transition}.
\end{defin}

Seeing a FV-DDP as a collection of RPMs, one is immediately led to wonder about the support properties of the induced prior. The weak support of a DDP indexed by an $\mathbb{R}_{+}$-valued covariate is the smallest closed set in $\B\{ \mathcal{P}(\Y)^{\mathbb{R}_+} \}$ with probability one, where $\mathcal{P}(\Y)$ is the set of probability measures on $\Y$ and $\B\{ \mathcal{P}(\Y)^{\mathbb{R}_+} \}$ is the Borel $\sigma$-field generated by the product topology of weak convergence. \cite{BJQ12} investigated these aspects for a large class of DDPs based on the stick-breaking representation of the {Dirichlet process}. Since no such representation is known for the FV process, our case falls outside that class. The following proposition states that a FV-DDP has full weak support, relative to the support of $P_0$.

\begin{prop}\label{weak_result}
Let $\alpha=\theta P_{0}$ and {$\Y$} be the support of $P_0$. Then the weak support of a \emph{FV-DDP}$(\alpha)$ is given by $\mathcal{P}(\Y)^{\mathbb{R}_+}$.
\end{prop}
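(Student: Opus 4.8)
The plan is to show that the law of $\{X_t\}_{t\ge 0}$ charges every nonempty basic open set of the product topology with positive probability, which is equivalent to its weak support being all of $\mathcal{P}(\Y)^{\mathbb{R}_+}$. Since the sets $\{f:\ f(t_i)\in U_i,\ i=1,\dots,k\}$, with $k\ge 1$, $0\le t_1<\dots<t_k$ and $U_1,\dots,U_k$ nonempty weak-open subsets of $\mathcal{P}(\Y)$, form a base for that topology, it suffices to prove
\[
\P\big(X_{t_1}\in U_1,\dots,X_{t_k}\in U_k\big)>0
\]
for every such choice. Note that full weak support is not in conflict with the a.s.\ discreteness of the marginals, as discrete measures are weakly dense in $\mathcal{P}(\Y)$.

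The key structural observation is that the $m=0$ summand of the transition function \eqref{transition} decouples distinct time points. Reading off that term (empty exponent, empty product of integrals) gives, as an inequality between measures,
\[
P_t(x,\cdot)\ \ge\ d_0(t)\,\Pi_\alpha(\cdot),\qquad t>0,
\]
for every $x$, with a factor $d_0(t)$ that does not depend on $x$: conditionally on the latent death variable reaching $0$, the arrival RPM is a fresh, independent draw from $\Pi_\alpha$. I would then establish that $d_0(t)>0$ for all $t>0$, which is the one genuinely quantitative point. Here $D_t$ is the pure-death process with rates $\lambda_m=m(\theta+m-1)/2$, absorbed at $0$, and its absorption time is $T_0=\sum_{m\ge 1}E_m$ with $E_m$ independent and $\mathrm{Exp}(\lambda_m)$-distributed. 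Since $\lambda_m$ grows quadratically, $\sum_m\lambda_m^{-1}<\infty$, so $T_0<\infty$ a.s.\ (coming down from infinity); moreover one can pick $\epsilon_m$ with $\sum_m\epsilon_m\le t$ and $\sum_m e^{-\lambda_m\epsilon_m}<\infty$, whence $\prod_m\P(E_m\le\epsilon_m)=\prod_m(1-e^{-\lambda_m\epsilon_m})>0$, forcing $T_0\le t$ on an event of positive probability. Thus $d_0(t)=\P(T_0\le t)>0$.

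Granting this, and assuming first that $t_1>0$, I would expand the joint probability via the Markov property and \eqref{transition}, with the convention $t_0=0$ and $X_0\sim\Pi_\alpha$,
\[
\P\big(X_{t_1}\in U_1,\dots,X_{t_k}\in U_k\big)=\int \Pi_\alpha(\d x_0)\int \prod_{i=1}^{k} P_{t_i-t_{i-1}}(x_{i-1},\d x_i)\,\mathbbm{1}_{U_i}(x_i),
\]
and bound below each factor $P_{t_i-t_{i-1}}(x_{i-1},\cdot)$ by its $m=0$ term $d_0(t_i-t_{i-1})\Pi_\alpha(\cdot)$. Because these lower bounds are independent of the preceding coordinate, the nested integral factorizes and
\[
\P\big(X_{t_1}\in U_1,\dots,X_{t_k}\in U_k\big)\ \ge\ \prod_{i=1}^{k} d_0(t_i-t_{i-1})\,\prod_{i=1}^{k}\Pi_\alpha(U_i).
\]
By Ferguson's classical description of the support of the Dirichlet process \citep{F73}, $\Pi_\alpha$ has full weak support $\mathcal{P}(\Y)$ when $\Y=\mathrm{supp}(P_0)$, so $\Pi_\alpha(U_i)>0$ for every nonempty open $U_i$; together with $d_0(\cdot)>0$ this makes the right-hand side strictly positive.

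The only real obstacle is the positivity $d_0(t)>0$; everything else follows directly from the mixture form of \eqref{transition} and the marginal Dirichlet support. To finish I would treat the boundary case $t_1=0$, where $d_0$ cannot be applied to the first gap ($d_0(0)=0$): there $X_{t_1}=X_0\sim\Pi_\alpha$ contributes the factor $\Pi_\alpha(U_1)>0$ directly and the $m=0$ decoupling is applied to the remaining gaps $t_2-t_1,\dots,t_k-t_{k-1}$, giving the same conclusion. The measurability of the cylinder events is routine.
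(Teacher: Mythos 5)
Your proof is correct, and its engine is the same as the paper's: lower-bounding the transition function \eqref{transition} by its $m=0$ term, $P_t(x,\cdot)\ge d_0(t)\,\Pi_\alpha(\cdot)$, so that successive time points decouple and each contributes a strictly positive factor $d_0(\cdot)\,\Pi_\alpha(U_i)$. The packaging, however, differs. The paper reduces the claim to the finite-dimensional criterion of Barrientos, Jara and Quintana (their Eq.~3): it projects the FV process onto a measurable partition of $\Y$ with $P_0$-null boundaries, obtains a Wright--Fisher diffusion whose transition density is bounded below by $d_0(t)\pi_\alpha(\mathbf{x}')$, and concludes from the positivity of the Dirichlet density on a set of positive Lebesgue measure. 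You instead stay at the level of the measure-valued kernel, verify positivity on basic open cylinder sets of the product topology directly, and invoke Ferguson's theorem that $\Pi_\alpha$ has full weak support on $\mathcal{P}(\Y)$ when $\Y=\mathrm{supp}(P_0)$. Your route avoids the projection machinery and the Wright--Fisher transition density altogether, at the cost of having to justify the reduction to basic open sets yourself, which you do correctly. You also supply two details the paper leaves implicit: a proof that $d_0(t)>0$ for every $t>0$, via the coming-down-from-infinity estimate $\prod_m\bigl(1-e^{-\lambda_m\epsilon_m}\bigr)>0$ for a summable choice of $\epsilon_m$ with $\sum_m\epsilon_m\le t$, and the boundary case $t_1=0$, where the first factor is handled by $X_0\sim\Pi_\alpha$ rather than by $d_0$. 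Both of these are genuine, if small, additions in rigour.
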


In order to formalize the statistical setup, we cast the FV-DDP into a hidden Markov model framework. A hidden Markov model is a double sequence $\{ (X_{t_n},Y_{t_{n}}), n \geq 0 \}$ where $X_{t_n}$ is an unobserved Markov chain, called hidden or \emph{latent signal}, and $Y_{t_n}$ are conditionally independent observations given the signal. The signal can be thought of as the discrete-time sampling of a continuous time process, and is assumed to completely specify the distributions of the observations, called \emph{emission distributions}. While the literature on hidden Markov models has mainly focussed on finite-dimensional signals, infinite-dimensional cases have been previously considered in \cite{BGR02,VSTG08,SGGL09,YPRH11,ZZZ14,PRS16}. 

Here we take $X_{t_{n}}$ to be a FV-DDP as in Definition \ref{fv_def}, evaluated at $p$ times $0 = t_0 < \dots < t_{p-1}=T$. The sampling model {is thus}
\begin{equation}\label{model}
\begin{aligned}
 Y_{t_n}^{i} \mid X_{t_{n}} = x \overset{\text{iid}}{\sim} x, \quad \quad 
 X_t \sim \text{FV-DDP}(\alpha).
\end{aligned}
\end{equation}
It follows that any two variables $Y_{t_{n}}^{i},Y_{t_{m}}^{j}$ are conditionally independent given $X_{t_{n}}$ and $X_{t_{m}}$, with product distribution $X_{t_{n}}\times X_{t_{m}}$.

In addition, similarly to mixing a DP with respect to its parameter measure as in \cite{A74}, one could also consider randomizing {the parameter $\alpha$} in \eqref{model}, e.g.~by letting $\alpha=\alpha_{\gamma}$ and $\gamma\sim \pi$ on an appropriate space.

In the following, we will denote for brevity $\YY_{n}:=\YY_{t_{n}}$ and $\YY_{0:T} := (\YY_{0}, \dots, \YY_T)$, where $\YY_{i}$ is the set of $n_{i}$ observations collected at time $t_{i}$. We will sometimes refer to $\YY_{0:T}$ as the \emph{past values}, since the inferential interest will be set at time $T+t$. We will also denote by $(y^*_1, \dots, y^*_{\dist})$ the {$\dist$ distinct values in $\YY_{0:T}$}, where $K\le \sum_{i=0}^{T}n_{i}$. In this framework, \cite{PRS16} showed that the conditional distribution of the RPM $X_{T}$, given $\YY_{0:T}$, can be written as
\begin{equation}\label{oldresult}
\L(X_{T} | \YY_{0:T})= \sum_{\mm \in \MM}w_{\mm}\Pi _{\alpha + \sum_{j=1}^{\dist}m_j \delta_{y_j^*}}.
\end{equation}
The weights $w_{\mm}$ can be computed recursively as detailed in \cite{PRS16}. In particular, $\MM$ is a finite convex set of vector multiplicities $\mm = (m_1, \dots, m_K) \in \mathbb{Z}_+^{K}$ determined by $\YY_{0:T}$, which identify the mixture components in \eqref{oldresult} with strictly positive weight. 
We will call $\MM$ the set of \emph{currently active indices}. In particular, $\MM$ is given by the points that lie between the counts of $(y^*_1, \dots, y^*_{\dist})$ in $\YY_{T}$, which is the bottom node, and the counts of $(y^*_1, \dots, y^*_{\dist})$ in $\YY_{0:T}$, which is the top node. For example, if $T=1$ suppose we observe $\YY_{0}=(y_{1}^{*},y_{2}^{*})$ for some values $y_{1}^{*}\ne y_{2}^{*}$ and $\YY_{1}=\YY_{0}$, hence $K=2$. 
Then the top node is $(2,2)$ since in $\YY_{0:1}$ there are 2 of each of $(y_{1}^{*},y_{2}^{*})$ and the bottom node is $(1,1)$ which is the counts of $(y_{1}^{*},y_{2}^{*})$ in $\YY_{1}$. Cf.~Figure \ref{fig: graph mixture 0}. Note that observations with $K=3$ distinct values would generate a 3-dimensional graph, with the origin $(0,0,0)$ linked to 3 level-1 nodes $(1,0,0),(0,1,0),(0,0,1)$, and so on. In general, each upper level node is obtained by adding 1 to one of the lower node coordinates.

We note here that the presence of $d_{m}(t)$ in \eqref{transition} makes the computations with FV processes in principle intractable, yielding in general infinite mixtures difficult to simulate from (cf.~\citealp{JS17}). It is then remarkable that conditioning on past data one is able to obtain conditional distributions for the signal given by finite mixtures as in \eqref{oldresult}.

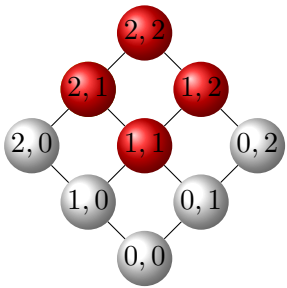
\begin{figure}[t!]
\begin{center} 
 \begin{tikzpicture}[scale=.3]
 \GraphInit[vstyle=Shade]
 \SetGraphShadeColor{yellow}{black}{yellow}
 \tikzset{LabelStyle/.style= {draw,
                              fill  = yellow,
                              text  = black}}
\SetGraphUnit{2.5}
\tikzset{VertexStyle/.style = {shape = circle,
                                shading = ball,
                                ball color = yellow,
                                minimum size = 20pt,
                                inner sep = 1pt,
			       color=black}}
\Vertex[L=\text{$2,1$}]{21}
\tikzset{VertexStyle/.style = {shape = circle,
                                shading = ball,
                                ball color = red,
                                minimum size = 20pt,
                                inner sep = 1pt,
			       color=black}}
\SOEA[L=\text{$1,1$}](21){11}
\NOEA[L=\text{$2,2$}](21){22}
\NOEA[L=\text{$1,2$}](11){12}		       
\SOWE[L=\text{$2,1$}](22){21}		       
\tikzset{VertexStyle/.style = {shape = circle,
                                shading = ball,
                                ball color = white,
                                minimum size = 20pt,
                                inner sep = 1pt,
			       color=black}}
\SOEA[L=\text{$0,1$}](11){01}
\NOEA[L=\text{$0,2$}](01){02}
\SOWE[L=\text{$0,0$}](01){00}
\SOWE[L=\text{$2,0$}](21){20}
\SOEA[L=\text{$1,0$}](20){10}
\tikzset{EdgeStyle/.style = {-,thin}}
\Edge(21)(20)
\Edge(12)(02)
\Edge(20)(10)
\Edge(10)(00)
\Edge(01)(00)
\Edge(02)(01)
\Edge(11)(10)
\Edge(11)(01)
\Edge(22)(12)
\Edge(12)(11)
\Edge(22)(21)
\Edge(21)(11)
\tikzset{VertexStyle/.style = {}}
 \end{tikzpicture}

\begin{quote}
\caption{\scriptsize Red indices in the graph identify active mixture components at time $T$, i.e.~the set $\MM$ in \eqref{oldresult}, corresponding to points $\mm\in \mathbb{Z}_{+}^{K}$ with positive weight. In this example $K=2$, and the graph refers to $\MM$ at time $T=1$ if we observe $\YY_{0}=(y_{1}^{*},y_{2}^{*})=\YY_{1}$.} \label{fig: graph mixture 0}
\end{quote}
\end{center}
\end{figure}



\section{Predictive inference with FV-DDPs}\label{sec: results}

\subsection{Predictive distribution}

In the above framework, we are primarily interested in  predictive inference, which requires obtaining the predictive distribution of $Y_{T+t}^{1},\ldots,Y_{T+t}^{k}|\YY_{0:T}$, that is the marginal distribution of a $k$-sized sample drawn at time $T+t$, given data collected up to time $T$, when the random measures involved are integrated out. See Figure \ref{fig: graphical model}. Note that by virtue of the stationarity of the FV process, if $X_{0}\sim \Pi_{\alpha}$, then $\P(Y_{t}\in A)=P_{0}(A)$ for any $t\ge0$. 
Note also that if one mixes model \eqref{model} by randomizing the parameter measure $\alpha=\alpha_{\gamma}$ as mentioned above, the evaluation the predictive distributions is of paramount importance for posterior computation. Indeed, one needs 
the distribution of {$\gamma |\YY_{0:T}$}, and if for example $\gamma$ has discrete support on $\mathbb{Z}_{+}$ with probabilities $\{p_{j},j\in \mathbb{Z}_{+}\}$, then
\begin{equation}\nonumber
\begin{aligned}
\P(\gamma = j|\YY_{0:T}) \propto p_j \P (\YY_{0:T}| j)
\propto p_j \P ( \YY_{0} | j )\P ( \YY_{1} | \YY_{0}, j ) \cdots \P(\YY_{T} | \YY_{0:T-1},j).
\end{aligned}
\end{equation}
Denote for brevity $Y_{T+t}^{1:k} := (Y_{T+t}^{1}, \dots, Y_{T+t}^{k})$ the $k$ {values} drawn at time $T+t$. For $\mm\in \mathbb{Z}_{+}^{K}$, let $\{\nn\in \mathbb{Z}_{+}^K: \nn \leq \mm\}$ be the set of nonnegative vectors such that $n_i \leq m_i$ for all $i$. Define also $|\nn|:=\sum_{j=1}^{K}n_{i}$, and 
\begin{equation}\label{L(M)}
L(\MM):=\{\nn\in \mathbb{Z}_{+}^{K}: \nn\le \mm, \mm\in \MM\}
\end{equation} 
to be all the points in $\mathbb{Z}_{+}^K$ lying below the top node of $\MM$. E.g., if $\MM$ is given by the red nodes in Figure \ref{fig: graph mixture 0}, then $L(\MM)$ is given by all nodes shown in the figure.

\medskip
\begin{figure}[t!]
\begin{center}
\begin{tikzpicture}[scale=.4]
 \GraphInit[vstyle=Shade]
 \SetGraphShadeColor{yellow}{black}{yellow}
 \tikzset{LabelStyle/.style= {draw,
                              fill  = yellow,
                              text  = black}}
\SetGraphUnit{4}
\tikzset{VertexStyle/.style = {shape = circle,
                                shading = ball,
                                ball color =yellow,
                                minimum size = 30pt,
                                inner sep = 1pt,
                                draw,
			       color=black}}			       
\Vertex[L=\text{$X_{0}$}]{x_0}
\EA[L = \text{$X_{t_1}$}](x_0){x_1}
\EA[L =\dots](x_1){x_dots}
\EA[L = \text{$X_{T}$}](x_dots){x_T}
\EA[L = \text{$X_{T+t}$},unit = 7](x_T){x_t}

\tikzset{VertexStyle/.style = {shape = circle,
                                shading = ball,
                                ball color =green,
                                minimum size = 30pt,
                                inner sep = 1pt,
                                draw,
			       color=black}}			       
\SO[L = \text{$\YY_{0}$}](x_0){y_0}
\EA[L = \text{$\YY_{t_1}$}](y_0){y_1}
\EA[L =\dots](y_1){y_dots}
\EA[L = \text{$\YY_{T}$}](y_dots){y_T}
\tikzset{VertexStyle/.style = {shape = circle,
                                shading = ball,
                                ball color =white,
                                minimum size = 30pt,
                                inner sep = 1pt,
                                draw,
			       color=black}}	
\EA[L = \textbf{?},unit = 7](y_T){y_t}
\tikzset{EdgeStyle/.style = {->, ultra thick}}
\Edge(x_0)(x_1)
\Edge(x_1)(x_dots)
\Edge(x_dots)(x_T)
\Edge(x_T)(x_t)
\Edge(x_0)(y_0)
\Edge(x_1)(y_1)
\Edge(x_dots)(y_dots)
\Edge(x_T)(y_T)
\Edge(x_t)(y_t)
 
 \end{tikzpicture}
 \end{center}
\begin{quote}
\caption{\scriptsize The predictive problem depicted as a graphical model. The upper yellow nodes are nonobserved states of the infinite-dimensional signal, the lower green nodes are conditionally independent observed data whose distribution is determined by the signal, the light gray node is the object of interest.}\label{fig: graphical model}
\end{quote}
\end{figure}
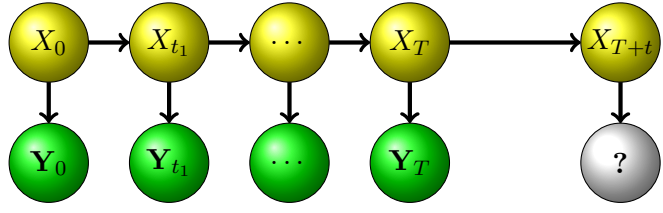

\begin{prop}\label{prop: main}
Assume \eqref{model}, and let the law of $X_{T}$ given data $\YY_{0:T}$ be as in \eqref{oldresult}, where the weights $w_{\mm}$ have been computed recursively. Then, for any Borel set $A$ of $\Y$, the 
first observation at time $T+t$ has distribution
\begin{equation}
\label{predictive 1}
\P 
\,\big(Y_{T+t}\in A|\YY_{0:T}\big) 
=
\sum_{\nn\in L(\MM)}p_t(\MM, \nn) \bigg(
\frac{\theta}{\theta + |\nn|}  P_{0}(A)
+\frac{|\nn|}{\theta + |\nn|}  P_{\nn}(A)
\bigg)
\end{equation}
and the $(k+1)$st observation at time $T+t$, given the first $k$, has distribution
\begin{equation}\label{predictive k+1}
\begin{aligned}
\P &\,\big(Y_{T+t}^{k+1}\in A| \YY_{0:T}, Y_{T+t}^{1:k}\big)=
\sum_{\nn \in L(\MM)}p^{(k)}_t(\MM, \nn)\\
&\times \bigg(
\frac{\theta}{\theta + |\nn| + k}  P_{0}(A)
+\frac{|\nn|}{\theta + |\nn| + k}  P_{\nn}(A)
+\frac{k}{\theta + |\nn| + k}  P_{k}(A)
\bigg)
\end{aligned}
\end{equation} 
where 
\begin{equation}\label{empiricals}
\begin{aligned}
P_\nn =&\,  \frac{1}{|\nn|}\sum_{i=1}^{K}n_i \delta_{y_i^*},\quad \quad 
P_k = \frac{1}{k}\sum_{j = 1}^k \delta_{Y_{T+t}^{j}}
\end{aligned}
\end{equation} 
and $(y^*_1, \dots, y^*_{\dist})$ are the distinct values in $\YY_{0:T}$.
\end{prop}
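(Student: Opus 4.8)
The plan is to first determine the full conditional law $\L(X_{T+t}\mid\YY_{0:T})$ and then read off both predictive statements by taking suitable moments. Since no data are recorded between $T$ and $T+t$, the only operation needed is to propagate the filtered law \eqref{oldresult} forward by the FV transition $P_t$ of \eqref{transition}. First I would show that this propagation maps the finite mixture of Dirichlet processes in \eqref{oldresult} into another finite mixture of the same form, namely
$$\L(X_{T+t}\mid\YY_{0:T})=\sum_{\nn\in L(\MM)}p_t(\MM,\nn)\,\Pi_{\alpha+\sum_{j=1}^{K}n_j\delta_{y_j^*}},$$
where each single component $\Pi_{\alpha+\sum_j m_j\delta_{y_j^*}}$ with $\mm\in\MM$ is sent to a mixture over the sublattice $\{\nn:\nn\le\mm\}$. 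Collecting the components sharing a common index $\nn$ over all $\mm\in\MM$ then produces exactly the index set $L(\MM)$ of \eqref{L(M)} and identifies $p_t(\MM,\nn)=\sum_{\mm\in\MM,\ \mm\ge\nn}w_\mm\,p_{\mm,\nn}(t)$, with $p_{\mm,\nn}(t)$ the single-component propagation weight.

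Granting this mixture representation, the first statement \eqref{predictive 1} follows by conditioning: since the $Y^i_{T+t}$ are iid $x$ given $X_{T+t}=x$, one has $\P(Y_{T+t}\in A\mid\YY_{0:T})=\E[X_{T+t}(A)\mid\YY_{0:T}]$, and the mean measure of a Dirichlet process with parameter $\alpha+\sum_j n_j\delta_{y_j^*}$ of total mass $\theta+|\nn|$ equals $(\theta P_0(A)+\sum_j n_j\delta_{y_j^*}(A))/(\theta+|\nn|)$. Substituting $P_\nn$ from \eqref{empiricals} and averaging over the mixture weights gives \eqref{predictive 1}. For the second statement I would exploit the conjugacy of the Dirichlet process within each mixture component: observing $Y^{1:k}_{T+t}$ iid from $X_{T+t}$ updates each component $\Pi_{\alpha+\sum_j n_j\delta_{y_j^*}}$ to $\Pi_{\alpha+\sum_j n_j\delta_{y_j^*}+\sum_{l=1}^k\delta_{Y^l_{T+t}}}$ and reweights the mixture by the corresponding component marginal likelihoods of $Y^{1:k}_{T+t}$, which defines $p^{(k)}_t(\MM,\nn)$. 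Taking once more the mean measure of each updated component, now of total mass $\theta+|\nn|+k$ and carrying the three atom types $\theta P_0$, $\sum_j n_j\delta_{y_j^*}$ and $\sum_l\delta_{Y^l_{T+t}}$, and recognizing $P_\nn$ and $P_k$ from \eqref{empiricals}, yields the three-term mixture \eqref{predictive k+1}.

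I expect the propagation step to be the only substantial obstacle. The delicate point is that the sampling representation of $P_t$ in \eqref{transition} creates, alongside copies of the tracked atoms $y_1^*,\dots,y_K^*$, a random number of fresh atoms drawn from the nonatomic $P_0$; a priori this would leave those fresh atoms in the parameter of the arrival Dirichlet process, so the resulting law would not be a finite mixture of the clean form above. The resolution is that, because $P_0$ is nonatomic, the fresh atoms are almost surely distinct from every $y_j^*$ and are exchangeable draws from $P_0$, so after integration they are reabsorbed into the baseline component $\theta P_0$ and only a pure death on the tracked multiplicities survives. This is precisely the equivalence between the sampling representation \eqref{transition} and the dual death-process representation of the FV semigroup, which reduces the computation to the one-dimensional death process $D_t$ of \eqref{D_t}, thinned across the $K$ coordinates by a multivariate hypergeometric allocation; this furnishes the weights $p_{\mm,\nn}(t)$ and, after the collection described above, the $p_t(\MM,\nn)$. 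The remaining algebra, expanding the mean measures and matching the empirical measures in \eqref{empiricals}, is routine.
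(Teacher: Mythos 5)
Your proposal is correct and follows essentially the same route as the paper: propagate the filtered mixture \eqref{oldresult} through the Fleming--Viot semigroup to get $\L(X_{T+t}\mid\YY_{0:T})$ as a finite mixture of Dirichlet processes indexed by $L(\MM)$, then obtain \eqref{predictive 1} and \eqref{predictive k+1} by taking the mean measure of each component and invoking Dirichlet-process conjugacy with likelihood-based reweighting for $k>0$. The only difference is that the propagation step you sketch (via duality with the death process and reabsorption of fresh $P_0$-atoms) is not re-derived in the paper but imported directly from eq.~(3.7) of \cite{PRS16}.
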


Before discussing the details of the above statement, a heuristic read of \eqref{predictive 1} is that the first observation at time $T+t$ is either a draw from the baseline distribution $P_{0}$, or a draw from a random subset of the past data points $\YY_{0:T}$, identified by the latent variable $\nn\in L(\MM)$. Given how $L(\MM)$ is defined, $Y_{T+t}$ can therefore be thought of as being drawn from a mixture of P\'olya urns, each conditional on a different subset of the data, ranging from the full dataset to the empty set.  
Indeed, recall from Section \ref{sec: model} that the top node of $\MM$, hence of $L(\MM)$ in \eqref{L(M)}, is the vector of multiplicities of the distinct values $(y_{1}^{*},\ldots,y_{K}^{*})$ contained in the entire dataset $\YY_{0:T}$. The probability weights associated to each lower node $\nn\in L(\MM)$ are determined by a death process on $L(\MM)$, that differs from $D_t$ in \eqref{D_t}. In particular this is a Markov process that jumps from node  $\mm$ to node $\mm-\ee_{i}$ after an Exponential amount of time with parameter $m_{i}(\theta+|\mm|-1)/2$, with $\ee_{i}$ being the canonical vector in the $i$th direction. The weight associated with node $\nn\in L(\MM)$ is then given by the probability that such death process is in $\nn$ after time $t$, if started from any node in $\MM$. For example, if $\MM$ is as in Figure \ref{fig: graph mixture 0}, than the weight of the node $(0,2)$ is given by the probability that the death process is in $(0,2)$ after time $t$ if started from any other node of $\MM$. Being a non increasing process, the admissible starting nodes are $(2,2)$ and $(1,2)$. 
Figure \ref{fig: graph mixture} highlights these two admissible paths of the death process which land at node $(0,2)$.

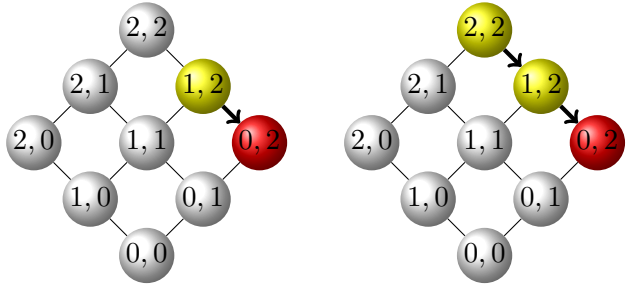
\begin{figure}[t!]
\begin{center} 
   \begin{tikzpicture}[scale=.3]
 \GraphInit[vstyle=Shade]
 \SetGraphShadeColor{yellow}{black}{yellow}
 \tikzset{LabelStyle/.style= {draw,
                              fill  = yellow,
                              text  = black}}
\SetGraphUnit{2.5}
\tikzset{VertexStyle/.style = {shape = circle,
                                shading = ball,
                                ball color = yellow,
                                minimum size = 20pt,
                                inner sep = 1pt,
			       color=black}}
\NOEA[L=\text{$1,2$}](11){12}
\tikzset{VertexStyle/.style = {shape = circle,
                                shading = ball,
                                ball color = red,
                                minimum size = 20pt,
                                inner sep = 1pt,
			       color=black}}
\NOEA[L=\text{$0,2$}](01){02}
\tikzset{VertexStyle/.style = {shape = circle,
                                shading = ball,
                                ball color = white,
                                minimum size = 20pt,
                                inner sep = 1pt,
			       color=black}}
\NOEA[L=\text{$2,2$}](21){22}
\Vertex[L=\text{$2,1$}]{21}
\SOWE[L=\text{$2,0$}](21){20}
\SOEA[L=\text{$0,1$}](11){01}
\SOEA[L=\text{$1,1$}](21){11}
\SOEA[L=\text{$1,0$}](20){10}
\SOWE[L=\text{$0,0$}](01){00}
\tikzset{EdgeStyle/.style = {-,thin}}
\Edge(21)(20)
\Edge(20)(10)
\Edge(10)(00)
\Edge(01)(00)
\Edge(02)(01)
\Edge(11)(10)
\Edge(11)(01)
\Edge(12)(11)
\Edge(22)(21)
\Edge(21)(11)
\Edge(22)(12)
\tikzset{EdgeStyle/.style = {->, ultra thick}}
\Edge(12)(02)
\tikzset{VertexStyle/.style = {}}
 \end{tikzpicture}
 \hspace{5mm}  \begin{tikzpicture}[scale=.3]
 \GraphInit[vstyle=Shade]
 \SetGraphShadeColor{yellow}{black}{yellow}
 \tikzset{LabelStyle/.style= {draw,
                              fill  = yellow,
                              text  = black}}
\SetGraphUnit{2.5}
\tikzset{VertexStyle/.style = {shape = circle,
                                shading = ball,
                                ball color = yellow,
                                minimum size = 20pt,
                                inner sep = 1pt,
			       color=black}}
\NOEA[L=\text{$2,2$}](21){22}
\NOEA[L=\text{$1,2$}](11){12}
\tikzset{VertexStyle/.style = {shape = circle,
                                shading = ball,
                                ball color = red,
                                minimum size = 20pt,
                                inner sep = 1pt,
			       color=black}}
\NOEA[L=\text{$0,2$}](01){02}
\tikzset{VertexStyle/.style = {shape = circle,
                                shading = ball,
                                ball color = white,
                                minimum size = 20pt,
                                inner sep = 1pt,
			       color=black}}
\Vertex[L=\text{$2,1$}]{21}
\SOWE[L=\text{$2,0$}](21){20}
\SOEA[L=\text{$0,1$}](11){01}
\SOEA[L=\text{$1,1$}](21){11}
\SOEA[L=\text{$1,0$}](20){10}
\SOWE[L=\text{$0,0$}](01){00}
\tikzset{EdgeStyle/.style = {-,thin}}
\Edge(21)(20)
\Edge(20)(10)
\Edge(10)(00)
\Edge(01)(00)
\Edge(02)(01)
\Edge(11)(10)
\Edge(11)(01)
\Edge(12)(11)
\Edge(22)(21)
\Edge(21)(11)
\tikzset{EdgeStyle/.style = {->, ultra thick}}
\Edge(22)(12)
\Edge(12)(02)
\tikzset{VertexStyle/.style = {}}
 \end{tikzpicture}
\begin{quote}
\caption{\scriptsize The weight associated to an index $\nn\in L(\MM)$ at time $T+t$ is determined by the probability that the death process reaches $\nn$ from any active index $\mm\in \MM$ at time $T$. For $\MM$ as in Figure \ref{fig: graph mixture 0}, the weight of the mixture component with index $\nn=(0,2)$, i.e., no atoms $y_{1}^{*}$ and 2 atoms $y_{2}^{*}$, is the sum of the probabilities of reaching node $(0,2)$ via the path starting from  $(1,2)$ (left) and from $(2,2)$ (right).}\label{fig: graph mixture}
\end{quote}
\end{center}
\end{figure}

The transition probabilities of this death process are 
 \begin{equation}\label{death process transitions}
p_{\mm, \nn}(t) = p_{|\mm|, |\nn|}(t)\text{HG}(\mm-\nn; \mm, |\mm-\nn|), \qquad \oo\le\nn\le \mm,
\end{equation}
where $\text{HG}(\textbf{i}; \mm, |\textbf{i}|)$ is the multivariate hypergeometric probability function evaluated at \textbf{i}, namely 
\begin{equation}\nonumber
\text{HG}(\textbf{i}; \mm, |\textbf{i}|) = \frac{\binom{\mm_1}{\textbf{i}_1}\dots\binom{\mm_l}{\textbf{i}_l}}{\binom{|\mm|}{|\textbf{i}|}}, \quad l = \text{dim}(\mm)
\end{equation} 
with dim$(\mm)$ denoting the dimension of vector $\mm$, while $p_{|\mm|, |\nn|}(t)$ is the probability of descending from level $|\mm|$ to $|\nn|$ (see Lemma $1$ in the Supplementary Material).
Hence, in general, the probability of reaching node $\nn\in L(\MM)$ from any node in $\MM$ is 
\begin{equation}\label{death process transitions from M}
p_t(\MM,\nn) = \sum_{\mm \in \MM, \mm \geq \nn}w_{\mm}p_{\mm, \nn}(t).
\end{equation} 
In conclusion, with probability $p_t(\MM,\nn)$ the first draw at time $T+t$ will be either from $P_{0}$, with probability $\theta/(\theta+|\nn|)$, or a uniform sample from the subset of data identified by the multiplicity vector $\nn$.

Concerning the general case for the $(k+1)$st observation at time $T+t$, trivial manipulations of \eqref{predictive k+1} provide different interpretative angles. Rearranging the term in brackets one obtains
\begin{equation}\label{structural resemblance}
\begin{aligned}
&\,\frac{\theta_{\nn}}{\theta_{\nn} + k}  P_{0,\nn} 
+\frac{k}{\theta_{\nn} + k}  P_{k}, 
\end{aligned}
\end{equation} 
which bears a clear structural resemblance to \eqref{DPpredictive}. Here
\begin{equation}\nonumber
\theta_{\nn}=\theta+|\nn|, \quad \quad 
P_{0,\nn}:=\frac{\theta}{\theta+|\nn|}P_{0}+\frac{|\nn|}{\theta+|\nn|}P_{\nn}
\end{equation} 
play the role of concentration parameter and baseline probability measure (i.e, the initial urn configuration), respectively. Thus \eqref{predictive k+1} can be seen as a mixture of P\'olya urns where the base measure has a {randomised discrete component} $P_{\nn}$. Unlike in \eqref{DPpredictive}, observations not drawn from empirical measure $P_{k}$ of the current sample can therefore be drawn either from $P_{0}$ or from the empirical measure $P_\nn$, where past observations are assigned multiplicities $\nn$ with probability $p^{(k)}_t(\MM,\nn)$. 

{An alternative interpretation} is obtained by developing the sum in \eqref{predictive k+1} to obtain a single generalised P\'olya urn, written in compact form as
\begin{equation}\label{pred_notation}
\P\big(Y_{T+t}^{k+1}\in \, \cdot \, | \,\YY_{0:T}, Y_{T+t}^{1:k} \big)  = A_kP_0(\cdot) + \sum_{ i= 1}^{\dist}C_{i,k}\delta_{y_i^*}(\cdot) + B_{k}P_k(\cdot)
\end{equation}
where $A$ is a Borel set of $\Y$.
In this case, the first observation is either from $P_{0}$ or a copy of a past value $\YY_{0:T}$, namely
\begin{equation}\nonumber
Y_{T+1}^{1}\sim
\begin{cases}
P_{0} \quad & \text{w.p. } A_{0}  \\
\delta_{y_{i}^{*}} \quad & \text{w.p. } C_{i,0},
\end{cases}
\end{equation} 
while the $(k+1)$st can also be a copy of one of the first $k$ current observations $Y_{T+t}^{1:k}$, namely
\begin{equation}\nonumber
Y_{T+1}^{k+1}\sim
\begin{cases}
P_{0} \quad & \text{w.p. } A_{k}  \\
\delta_{y_{i}^{*}} \quad & \text{w.p. } C_{i,k}\\
P_{k} \quad & \text{w.p. } B_{k}.
\end{cases}
\end{equation} 
The pool of values to be copied is therefore given by past values $\YY_{0:T}$ and current, already sampled observations $Y_{T+t}^{1:k}$.

After each draw, the weights associated to each node need to be updated according to the likelihood that the observation was generated by the associated mixture component, similarly to what is done for mixtures of Dirichlet processes. Specifically, 
\begin{equation}\label{weight update}
p^{(k+1)}_t\left(\MM, \nn \right) \propto p^{(k)}_t(\MM,\nn)
p(y_{T+t}^{k+1}\mid y_{T+t}^{1:k},\nn)
\end{equation}
where 
\begin{equation}\label{k+1 conditional urn predictive}
p(y_{T+t}^{k+1}\mid y_{T+t}^{1:k},\nn):=
\frac{\theta p_0(y^{k+1}_{T+t})+ \sum_{i=1}^{\dist}n_i \delta_{y_i^*}(\{y^{k+1}_{T+t}\}) + \sum_{j = 1}^{k}\delta_{y^j_{T+t}}(\{y^{k+1}_{T+t}\})}{\theta + |\nn| + k} 
\end{equation} 
is the predictive distribution of the $(k+1)$st observation given the previous $k$ and conditional on $\nn$, 
and $p_0$  is the density of $P_0$ with respect to the Lebsegue or the counting measure. 

As a byproduct of Proposition \ref{prop: main}, we can evaluate the correlation between observations at different time points. 

\begin{prop}
For $t,s>0$, let $Y_{t},Y_{t+s}$ be from \eqref{model}. Then
\begin{equation}\nonumber
\mathrm{Corr}(Y_{t},Y_{t+s})= \frac{e^{-\frac{\theta}{2}s}}{\theta+1}.
\end{equation} 
\end{prop}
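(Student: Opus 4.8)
The plan is to reduce the correlation to a covariance and then to a single functional of the death process governing the FV transition. First I would exploit stationarity: since $X_0\sim\Pi_\alpha$ is invariant for \eqref{transition}, every marginal $X_r$ is a Dirichlet process with parameter $\alpha=\theta P_0$, so each $Y_r$ is marginally $P_0$-distributed. Taking $\Y\subseteq\mathbb{R}$ and $P_0$ with finite second moment (as is needed for the correlation to be defined), and writing $\mu=\int_\Y y\,P_0(\d y)$ and $\sigma^2=\int_\Y y^2\,P_0(\d y)-\mu^2$, both $Y_t$ and $Y_{t+s}$ have mean $\mu$ and variance $\sigma^2$, so it suffices to compute $\mathrm{Cov}(Y_t,Y_{t+s})$. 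Conditioning on the signal and using that the observations are conditionally independent given $\{X_r\}$ with $\E[Y_r\mid X_r=x]=\bar x:=\int_\Y y\,x(\d y)$, I obtain $\E[Y_t Y_{t+s}]=\E[\bar X_t\,\bar X_{t+s}]$, reducing the problem to the second moment of the mean functional of the FV process.

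Next I would compute $\E[\bar X_{t+s}\mid X_t=x]$ directly from \eqref{transition}. Conditional on $D_s=m$ and on $W_1,\dots,W_m\overset{\text{iid}}{\sim}x$, the state $X_{t+s}$ is a posterior Dirichlet process whose mean measure is $(\theta P_0+\sum_i\delta_{W_i})/(\theta+m)$, so $\E[\bar X_{t+s}\mid X_t=x,D_s=m,W]=(\theta\mu+\sum_i W_i)/(\theta+m)$. Averaging the $W_i$ over $x$ and then $m$ over $d_m(s)$ gives $\E[\bar X_{t+s}\mid X_t=x]=\mu\,(1-a(s))+\bar x\,a(s)$, where $a(s):=\sum_{m\ge 0}d_m(s)\,m/(\theta+m)$. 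Taking a further expectation over $X_t=x\sim\Pi_\alpha$ and using $\E[\bar X_t]=\mu$ together with the classical Dirichlet variance identity $\mathrm{Var}(\bar X_t)=\sigma^2/(\theta+1)$, I reach $\E[\bar X_t\,\bar X_{t+s}]=\mu^2+a(s)\,\sigma^2/(\theta+1)$, whence $\mathrm{Cov}(Y_t,Y_{t+s})=a(s)\,\sigma^2/(\theta+1)$ and $\mathrm{Corr}(Y_t,Y_{t+s})=a(s)/(\theta+1)$. Note that $\sigma^2$ and all dependence on $t$ cancel, as expected from stationarity.

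The crux is therefore to show $a(s)=e^{-\theta s/2}$, and this is the step I expect to be the main obstacle, since the series $d_m(s)$ in \eqref{D_t} is awkward to manipulate termwise. Rather than summing directly, I would use the generator of the pure-death process $D_t$, namely $G\phi(m)=\lambda_m(\phi(m-1)-\phi(m))$ with $\lambda_m=m(\theta+m-1)/2$, and verify that $\psi(m):=m/(\theta+m)$ is an eigenfunction: a short computation gives $\psi(m-1)-\psi(m)=-\theta/[(\theta+m-1)(\theta+m)]$, whence $G\psi(m)=-\tfrac{\theta}{2}\psi(m)$. Consequently $h(s):=\E[\psi(D_s)]$ solves $h'(s)=-\tfrac{\theta}{2}h(s)$, so $\E[\psi(D_s)]=e^{-\theta s/2}\,\E[\psi(D_0)]$; since $D_0=\infty$ as an entrance boundary and $\psi(\infty)=1$, this yields $a(s)=\E[\psi(D_s)]=e^{-\theta s/2}$ and hence the claim. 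The points requiring care are the justification of $\tfrac{d}{ds}\E[\psi(D_s)]=\E[G\psi(D_s)]$, which is legitimate because $\psi$ is bounded and $D_t$ is a well-behaved pure-jump process descending from infinity, and the evaluation of the initial condition at the entrance boundary.
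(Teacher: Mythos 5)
Your argument is correct, but it follows a genuinely different route from the paper's. The paper treats the proposition as a direct corollary of Proposition 2: with a single conditioning observation $Y_t$ the active set is the two-node graph $\{0,1\}$, the relevant death process survives at level $1$ with probability $e^{-\lambda_1 s}=e^{-\theta s/2}$, and the two-time joint law $P(\d y_t,\d y_{t+s})$ is written out explicitly via the chain rule, after which the covariance is read off in two lines. You instead bypass the predictive formula entirely: you reduce the covariance to $\mathrm{Cov}(\bar X_t,\bar X_{t+s})$ for the mean functional of the signal, compute $\E[\bar X_{t+s}\mid X_t]$ directly from the transition function \eqref{transition}, and are left with the identity $\sum_m d_m(s)\,m/(\theta+m)=e^{-\theta s/2}$, which you prove by exhibiting $\psi(m)=m/(\theta+m)$ as an eigenfunction of the generator of the pure-death process $D_t$ with eigenvalue $-\theta/2$ (your algebra checks out, and the value $1$ at the entrance boundary gives the right initial condition). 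The paper's route is shorter because it piggybacks on machinery already built, and it sidesteps the infinite series $d_m(s)$ altogether since the conditional death process on $L(\MM)$ is finite; yours is self-contained relative to \eqref{transition}, identifies $e^{-\theta s/2}$ as a spectral quantity of the dual death process (which explains structurally why the decay is exponential and would extend to autocorrelations of other functionals), but requires the extra care you correctly flag about justifying $\tfrac{\d}{\d s}\E[\psi(D_s)]=\E[G\psi(D_s)]$ for a process coming down from infinity.
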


Unsurprisingly, the correlation decays to 0 as the lag $s$ goes to infinity. 
Moreover, 
\[
\mathrm{Corr}(Y_{t},Y_{t+s}) \to \frac{1}{\theta+1}, \quad \text{as } s \to 0
\]
which is the correlation of two observations from a DP as in \eqref{DPmodel}.

\subsection{Sampling from the predictive distribution}

In order to make Proposition \ref{prop: main} useful in practice, we provide an explicit algorithm to sample from the predictive distribution \eqref{predictive k+1}, which can be useful \emph{per se} or for approximating posterior quantities of interest. 
Exploiting \eqref{structural resemblance} and the fact that \eqref{predictive k+1} can be seen as a mixture of P\'olya urns, we can see $\nn\in \mathbb{Z}_{+}^{K}$ as a latent variable whereby, given $\nn$, sampling proceeds very similarly to a usual P\'olya urn.

Recalling that $|\nn|=\sum_{j=1}^{K}n_{i}$, a simple algorithm for the $(k+1)$st observation would therefore be:
\begin{list}{
$\bullet$
}{\itemsep=1mm\topsep=2mm\itemindent=7mm\labelsep=2mm\labelwidth=0mm\leftmargin=0mm\listparindent=0mm\parsep=0mm\parskip=0mm\partopsep=0mm\rightmargin=0mm\usecounter{enumi}}
\setcounter{enumi}{0}
\item sample $\nn \in L(\MM)$ w.p.~$p^{(k)}_t \left(\MM, \nn \right)$;
\item sample from $P_{0}, P_{\nn}$ or $P_{k}$ with probabilities proportional to $\theta,|\nn|,k$ respectively;
\item update weights $p^{(k)}_t \left(\MM, \nn \right)$ to $p^{(k+1)}_t \left(\MM, \nn \right)$ for each $\nn \in L(\MM)$.
\end{list}
A detailed pseudo-code description is provided in Algorithm \ref{alg: exact}.

\begin{algorithm}[h!]
\caption{Exact sampling from \eqref{predictive k+1}}\label{alg: exact}
\begin{algorithmic}[1]
\State 
\vspace{-4.8 mm}\begin{tabbing}
\textbf{Input:} 
\= - active nodes at time $T$: $\MM$\\
\> - precision parameter: $\theta$ \\
\> - last mixture weights $p^{(k)}_t(\MM,\nn)$, $\nn\in L(M)$\\
\> - past unique observations: $y_1^*, \dots , y_{\dist}^*$\\
\> - current observations: $y_{T+t}^{1}, \dots , y_{T+t}^{k}$
\end{tabbing}
\State \textbf{Sample} $\nn$ w.p. $p^{(k)}_t \left(\MM, \nn \right)$, $\nn \in L(\MM)$\\
 \textbf{Sample} $Y$ from $P_0,P_\nn$ or $P_k$ w.p.~$\frac{\theta}{\theta+|\nn|+k},\frac{|\nn|}{\theta+|\nn|+k},\frac{k}{\theta+|\nn|+k}$ respectively
\State \textbf{Set} $y^{k+1}_{T+t}=Y$ 
\State \textbf{Update parameters}: 
\For {$\nn \in L(\MM)$ and $p(y_{T+t}^{k+1}\mid y_{T+t}^{1:k})$
as in \eqref{k+1 conditional urn predictive}}
\State $p^{(k+1)}_t\left(\MM, \nn \right) = p^{(k)}_t(\MM,\nn)
p(y_{T+t}^{k+1}\mid y_{T+t}^{1:k})$ 
\State Normalize $p^{(k+1)}_t\left(\MM, \nn \right)$
\EndFor
\end{algorithmic}
\label{alg 1}
\end{algorithm}

A possible downside of the above sampling strategy is that when the set $L\left(\MM\right)$ is large, updating all weights may be computationally demanding. Indeed, the size of the set $L(\MM)$ is $|L(\MM)|=\prod_{j=1}^{K}(1+m_{j})$, where $m_{j}$ is the multiplicity of $y_{j}^{*}$ in the data, which can grow considerably with the number of observations (cf.~also Proposition 2.5 in \citealp{PR14}). 
It is however to be noted that, due to the properties of the death process that ultimately governs the time-dependent mixture weights, typically only a small portion of these will be significantly different from zero. Figure \ref{fig: death} illustrates this point by showing the nodes in $\{0,\ldots,50\}$ with weight larger than $0.05$ at different times, if at time 0 there is a unit mass at the node $50$, when $\theta = 1$. A deeper investigation of these aspects in a similar, but parametric, framework, can be found in \cite{KKK20b}.

\begin{figure}[t!]
\centering
   \includegraphics[width=.6\textwidth]{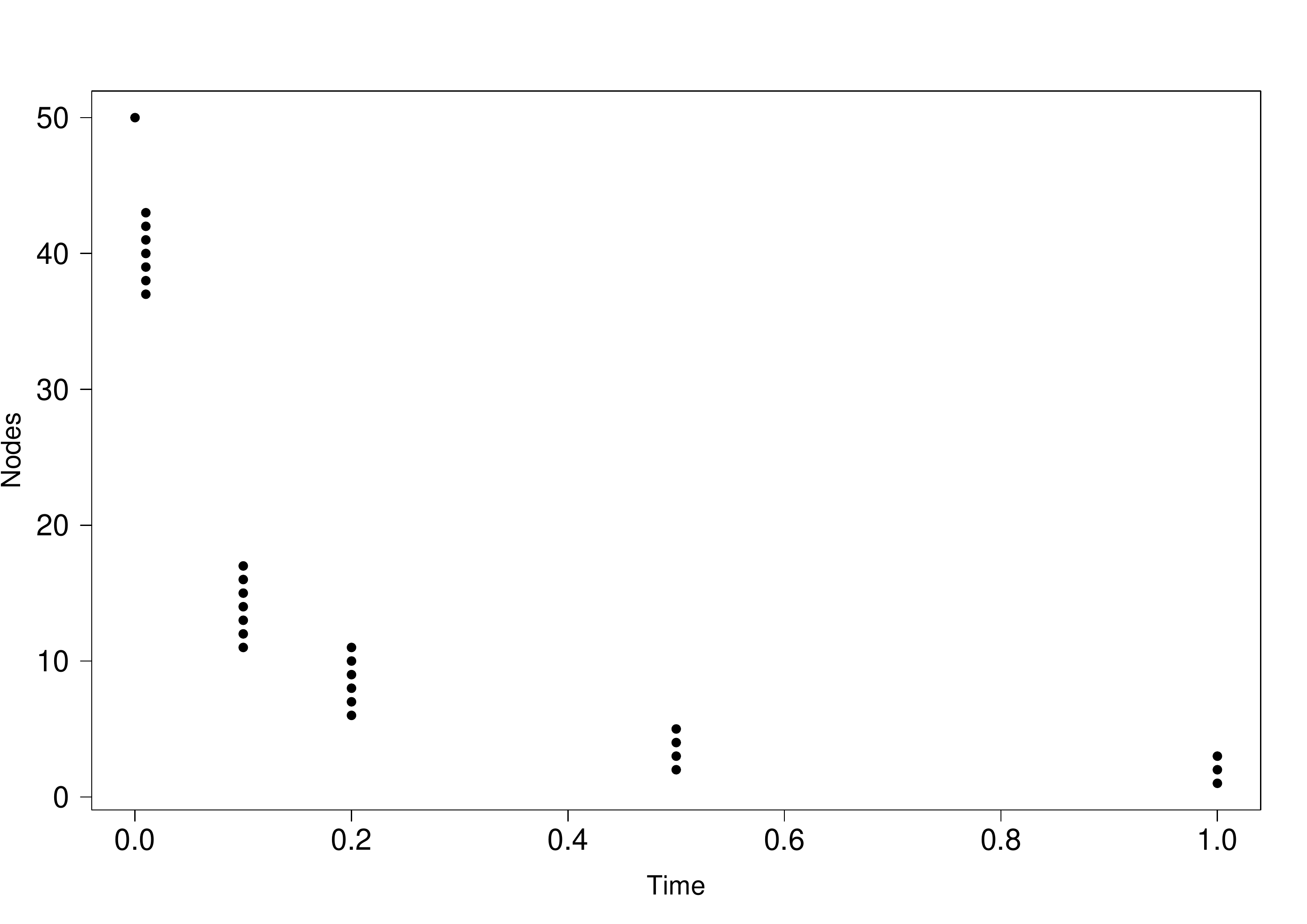}
\begin{quote}
\caption{\scriptsize Nodes in $\{0,\ldots,50\}$ (black dots) with probability of being reached by the death process bigger than $.05$ after lags .01, .1, .2, .5 and 1 (horizontal axis). Starting with mass 1 at the point 50, only a handful of nodes have significant mass after these lags.}
\label{fig: death}
\end{quote}
\end{figure}

Hence an approximate version of the above algorithm can be particularly useful to exploit this aspect. 
We can therefore target a set $\tilde{\MM} \subset L \left( \MM \right)$ such that $|\tilde{\MM}| \ll |L( \MM)|$ and $\sum_{\nn \in \tilde{\MM}}p_t\left(\MM, \nn \right) \approx 1$ by inserting a Monte Carlo step in the algorithm and simulate the death process with a large number of particles. The empirical frequencies of the particles landing nodes will then provide an estimate of the weights $p_t(\MM,\nn)$ in \eqref{predictive 1}. Furthermore, the simulation of the multidimensional death process can be factorised into simulating a one-dimensional death process, which simply tracks the number of steps down the graph, and hypergeometric sampling for choosing the landing node within the reached level. 
A simple algorithm for simulating the death process is as follows:  for $i=1,\ldots,N$,
\begin{list}{
$\bullet$
}{\itemsep=1mm\topsep=2mm\itemindent=-3mm\labelsep=2mm\labelwidth=0mm\leftmargin=9mm\listparindent=0mm\parsep=0mm\parskip=0mm\partopsep=0mm\rightmargin=0mm\usecounter{enumi}}
\setcounter{enumi}{0}
\item draw $\mm$ with probability $w_\mm$ and set $m=|\mm|$;
\item run a one-dimensional death process from $m$, and let $n$ be the landing point after time $t$;
\item draw {$\nn^{(i)}\sim \text{HG}(n,\mm/|\mm|)$;}

\end{list}
and return $\{\nn^{(i)},i=1,\ldots,N\}$.
Note, in turn, that the simulation of the death process trajectories does not require to evaluate its transition probabilities \eqref{death process transitions}, which are prone to numerical instability, and can instead be straightforwardly set up in terms of successive exponential draws by repeating the following cycle: for $i\ge1$,
\begin{list}{
$\bullet$
}{\itemsep=1mm\topsep=2mm\itemindent=7mm\labelsep=2mm\labelwidth=0mm\leftmargin=0mm\listparindent=0mm\parsep=0mm\parskip=0mm\partopsep=0mm\rightmargin=0mm\usecounter{enumi}}
\setcounter{enumi}{0}
\item draw $Z_{i}\sim \text{Exp}(m(\theta+m-1)/2)$
\item if $\sum_{j\le i}Z_{j}<t$ set $m=m-1$ 
else return $n=m-i+1$ and exit cycle.
\end{list}
Algorithm \ref{alg: approx} outlines the pseudocode for sampling approximately from \eqref{predictive k+1} according to this strategy.

\begin{algorithm}[t!]
\caption{Approximate sampling from \eqref{predictive k+1}}\label{alg: approx}
\begin{algorithmic}[1]
\State 
\vspace{-4.8mm}
\begin{tabbing}
\textbf{Input:} 
\= - active nodes at time $T$: $\MM$\\
\> - time to propagate: $t$\\
\> - precision parameter: $\theta$ \\
\> - mixture weights at time $T$: $w_{\mm}$\\
\> - past unique observations: $y_1^*, \dots , y_{\dist}^*$\\
\> - number of Monte Carlo iterates: $N$\\
\end{tabbing}
\vspace{-5.5mm}
\State $\tilde{\MM} = \emptyset$; $w = \emptyset$
\For {$i \in 1:N$ }
\State \textbf{Sample} $\mm$ w.p. $w_{\mm}$, $\mm \in \MM$
\State $n = |\mm|$; $s = t$
\For {$j \geq 1$}
\State \textbf{Sample} $Z$ from Exp$(n(\theta+n-1)/2)$ and set $s = s-Z$
\If {$s > 0$ and $n > 0$}
\State \textbf{Set} $n = n-1$
\Else
\State \textbf{Return} $n$ and exit cycle.
\EndIf
\EndFor
\State \textbf{Sample} $\nn \sim \text{HG}(n,\mm/|\mm|)$
\If {$\nn \not\in \tilde{M}$}
\State \textbf{Add} $\nn$ to $\tilde{\MM}$ and add $1$ to $w$
\Else
\State \textbf{Add} $1$ to the corresponding element of $w$
\EndIf
\EndFor
\State Normalize $w$.
\State Apply algorithm \ref{alg: exact} with $\MM = \tilde{\MM}$ and $p_t(\MM,\nn) = w$
\end{algorithmic}
\end{algorithm}


\subsection{Partition structure and Chinese restaurants with conveyor belt}

A sample from \eqref{predictive k+1} will clearly feature ties among the observations, since there are two discrete sources for the data, namely $P_{\nn}$ and $P_{k}$. A fundamental task concerning sampling models with ties is to characterize the distributional properties of the induced random partition. We say that a random sample $(Y_{1},\ldots,Y_{n})$ induces the partition $(n_{1},\ldots,n_{K})$ if $\sum_{i=1}^{K}n_{i}=n$ and grouping the observed values gives multiplicities $(n_{1},\ldots,n_{K})$. The distribution of a random partition generated by an exchangeable sequence is encoded in the so-called  exchangeable partition probability function, which for the {Dirichlet process} is given by \cite{P06}
\begin{equation}\label{DP EPPF}
p(n_1, \dots , n_k) =  \frac{\theta^k}{\theta_{(n)}}\prod_{i =1}^k(n_i - 1)!.
\end{equation}
The sampling scheme on the space of partitions associated to the Dirichlet process is generally depicted through a Chinese restaurant process \citep{P06}: the first customer sits at a table and orders a dish from the menu $P_{0}$, while successive customers either sit at an existing table $j$, with probability proportional to its current occupancy $n_{j}$, and receive the same dish as the other occupants, or sit at an unoccupied table, with probability proportional to $\theta$, and order from $P_{0}$. 

To account for random partitions induced by a FV-DDP, one can think of a \emph{conveyor belt} typical of some Chinese restaurants, which delivers a non constant selection of dishes that customers can choose to pick up. See  Figure \ref{fig: conveyor belt}. In the context of \eqref{predictive k+1}, each new customer on day $T+t$ faces a different configuration $\nn$ of dishes available on the conveyor belt, determined by the weights $p^{(k)}_t\left(\MM, \nn \right)$. This depends on the following factors: 
(i) which dishes were most popular on day $T$, the greater the popularity, the higher their multiplicity in the nodes of $\MM$, hence the greater their average multiplicity on the conveyor on day $T+t$ as determined by $\nn$;
(ii) the removal of dishes that showed symptoms of food spoilage before the first customer arrives, as determined by the temporal component;
(iii) previous customers choices, as the kitchen readjusts the conveyor at each new customer by reinforcing the most popular dishes, as determined by the update \eqref{weight update}.

\begin{figure}[t!]
\begin{center}
\begin{tikzpicture}[scale=.3]
\draw (-7,2) -- (27,2);
\draw (-7,-2) -- (27,-2);

 \GraphInit[vstyle=Normal]
 \SetGraphShadeColor{yellow}{black}{yellow}
 \tikzset{LabelStyle/.style= {draw,
                              fill  = yellow,
                              text  = black}}
\SetGraphUnit{4}
\tikzset{VertexStyle/.style = {shape = circle,
                                minimum size = 30pt,
                                inner sep = 0pt,
                                draw,
			       color=black}}			       
\Vertex{1}
\EA(1){2}
\EA[L=\text{}](2){3}
\EA[L=\text{1}](3){4}
\EA[L=\text{1}](4){5}
\EA[L=\text{}](5){6}
\draw[->,ultra thick] (-7,0) -- (-3,0);
\draw[->,ultra thick] (23,0) -- (27,0);
\tikzset{VertexStyle/.style = {shape = circle,
                                shading = ball,
                                ball color =white,
                                minimum size = 30pt,
                                inner sep = 1pt,
                                draw,
			       color=black}}	
\tikzset{EdgeStyle/.style = {-, ultra thick}} 
 \end{tikzpicture}
 \end{center}
\begin{quote}
\caption{\scriptsize Schematic depiction of a conveyor belt running through the Chinese restaurant. The conveyor makes available to the customers only a time-varying selection from a pool of dishes. The Figure depicts the current selection, given by three dishes of type 1, one of type 2 and two empty slots from which previously available dishes have been removed.}
\label{fig: conveyor belt}
\end{quote}
\end{figure}
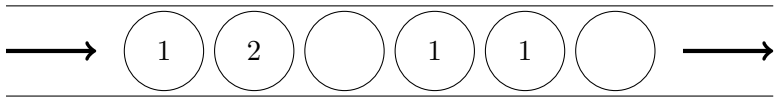

Schematically, the Chinese restaurant process with conveyor belt proceeds as follows. The first customer at time $T+t$ arrives at the restaurant, finds the configuration $\nn$ on the conveyor belt, then picks a dish 
\begin{list}{
$\bullet$
}{\itemsep=1mm\topsep=2mm\itemindent=-3mm\labelsep=2mm\labelwidth=0mm\leftmargin=9mm\listparindent=0mm\parsep=0mm\parskip=0mm\partopsep=0mm\rightmargin=0mm\usecounter{enumi}}
\setcounter{enumi}{0}
\item from the conveyor belt, with probability $|\nn|/(\theta+|\nn|)$ 
\item from the menu $P_{0}$, with probability $\theta/(\theta+|\nn|)$ 
\end{list}
and sits at the first table.
The kitchen then readjusts the offer on the conveyor belt based on the first customer's choice, through \eqref{weight update}. The $(k+1)$st customer arrives at the restaurant, finds a configuration $\nn'$ on the conveyor belt, then
\begin{list}{
$\bullet$
}{\itemsep=1mm\topsep=2mm\itemindent=-3mm\labelsep=2mm\labelwidth=0mm\leftmargin=9mm\listparindent=0mm\parsep=0mm\parskip=0mm\partopsep=0mm\rightmargin=0mm\usecounter{enumi}}
\setcounter{enumi}{0}
\item with probability $m_{j}/(\theta+|\nn'|+k)$ sits at table $j$ and receives the same dish as the other occupants, $m_{j}$ being the current table occupancy
\item otherwise picks a dish
\begin{itemize}
\item from the conveyor belt, with probability $|\nn'|/(\theta+|\nn'|+k)$ 
\item from the menu $P_{0}$, with probability $\theta/(\theta+|\nn'|+k)$ 
\end{itemize} 
and sits at a new table.
\end{list}
Note that node $\oo$ has always positive probability, in which case the conveyor belt is empty and \eqref{predictive k+1} reduces to \eqref{DPpredictive}. Hence a customer facing the configuration $\nn=\oo$ is entering a usual Chinese restaurant.

The usual way for formally deriving the law of a random partition induced by $n$ observations from $X_{T+t}$ would be to compute
\begin{equation}\nonumber
\int_{\Y^{q}}\E\Big[
[X_{T+t}(\d y_{1})]^{n_{1}}\cdots [X_{T+t}(\d y_{q})]^{n_{q}}
\Big], \quad \quad q\le n,
\end{equation} 
which evaluates the probability of all possible configurations of multiplicities $(n_{1},\ldots,n_{q})$, with $q\le n$ and $\sum_{h=1}^{q}n_{h}=n$, irrespective of the values $Y_{i}$ that generated them. This entails a considerable combinatorial complexity, particularly given by the fact that $X_{T+t}$, which has a similar representation to \eqref{oldresult}, is given by a mixture of Dirichlet processes whose base measures have partially shared discrete components. 

Alternatively, one can derive \eqref{DP EPPF} from \eqref{DPpredictive}, better seen by rewriting $P_{k}$ in terms of multiplicities of the distinct values, by assuming observations in the same group arrive sequentially, so that the first group has multiplicity $n_{1}$ with probability proportional to $\theta(n_{1}-1)!$, the second has multiplicity $n_{2}$ with probability proportional to $\theta(n_{2}-1)!$, and so on.
Similarly, we can use the results in Proposition \ref{prop: main} to derive the explicit law of a partition induced by a sample from  $X_{T+t}$. The resulting expression, given in Lemma 2 in the Supplementary Material, suffers from the combinatorial complexity due to the possibility of sampling values that start a group both from $P_{0}$ and from $P_{\nn}$, where $\nn$ is itself random. Here instead we provide an algorithm for generating such random partitions, which can be used, for example, used to study the posterior distribution of the number of clusters directly, i.e.~without resorting to Proposition \ref{prop: main}. 
Mimicking the argument above, we need to  
\begin{list}{
$\bullet$
}{\itemsep=1mm\topsep=2mm\itemindent=-3mm\labelsep=2mm\labelwidth=0mm\leftmargin=8mm\listparindent=0mm\parsep=0mm\parskip=0mm\partopsep=0mm\rightmargin=0mm\usecounter{enumi}}
\setcounter{enumi}{0}
\item choose whether to sample a new value from $P_0$ or from any of the $P_{\nn}$'s
\item draw the new observation after {excluding from $P_{\nn}$ the recorded values}
\item draw the size of the corresponding group.
\end{list}
From \eqref{pred_notation}, the probability of drawing a new observation is therefore given by $A_{k} + \sum_{i \in \mathcal{K}}C_{i,k}$, where $\mathcal{K}$ is the set of past observations still not present in the current sample. The probability of enlarging a group associated to the value $y$ by one is instead
\[
\begin{cases}
B_{k}P_k(\{y\}) \quad \text{if } y  \neq y_j^*, \, \forall j \\
B_{k}P_k(\{y\}) + C_{j,k}  \quad \text{if } y = y_j^*.
\end{cases}
\]
Algorithm \ref{alg: eppf} outlines the pseudocode for sampling a random partition according to this strategy.

\begin{algorithm}[t!]
\caption{Sampling random partitions at time $T+t$}\label{alg: eppf}
\begin{algorithmic}[1]
\State 
\vspace{-4.8 mm}\begin{tabbing}
\textbf{Input:} 
\= - active nodes at time $T$: $\MM$\\
\> - mixture weights at time T: $w_{\mm}$\\
\> - past unique observations: $y_1^*, \dots , y_{\dist}^*$\\
\> - number of observations to draw: $n$
\end{tabbing}
\State \textbf{Initialize} $L= 0$, $\mathcal{K} = \{1, \dots, \dist \}$ and $\mathcal{L} = \emptyset$
\While {$L < n$}
\State \textbf{Sample} $N$ equal to $0$ w.p. $A_L$ and equal to $i$ w.p. $C_{i,L}$, with $i \in \mathcal{K}$.
\If {$ N = 0 $}
\State \textbf{Sample} $Y$ from $P_0$
\State \textbf{Sample} $l$ equal to
\State \quad - $1$ w.p. $A_{L+1} + \sum_{i \in \mathcal{K}}C_{i,L+1}$ 
\State \quad - $j$ w.p. $\left( A_{L+j} + \sum_{i \in \mathcal{K}}C_{i,L+j} \right)\prod_{p=1}^{j-1}B_{L+p}\frac{p}{L+1}$
\State  \quad \quad with $j = \quad 2, \dots , n-L$
\Else
\State \textbf{Set} $Y = y^*_N$ and set $\mathcal{K} = \mathcal{K} \backslash N$
\State \textbf{Sample} $l$ equal to
\State \quad - $1$ w.p. $A_{L+1} + \sum_{i \in \mathcal{K}}C_{i,L+1}$ 
\State \quad - $j$ w.p. $\left( A_{L+j} + \sum_{i \in \mathcal{K}}C_{i,L+j} \right)\prod_{p=1}^{j-1}\left[ C_{i,L+p}B_{L+p}\frac{p}{L+1}\right]$
\State  \quad \quad with $j = \quad 2, \dots , n-L$
\EndIf
\State \textbf{Set} $L = L + l$ and add $Y$ to $\mathcal{L}$.
\EndWhile
\State Return $\mathcal{L}$
\end{algorithmic}
\end{algorithm}


\subsection{Asymptotics}

We investigate two asymptotic regimes for \eqref{predictive k+1}. 
The following Proposition shows that  when $t\rightarrow \infty$, the FV-DDP predictive distribution converges to the usual P\'olya urn \eqref{DPpredictive}.

\begin{prop}\label{lemma_asymp}
Under the hypotheses of Proposition \ref{prop: main}, we have
\[
\L\big(Y_{T+t}^{k+1}|\YY_{0:T}, Y_{T+t}^{1:k}\big) \overset{\mathrm{TV}}{\longrightarrow} \frac{\theta}{\theta+k} P_0  + \frac{k}{\theta+k}P_k, \quad \quad 
a.s., \text{as $t \to \infty$},
\]
with $P_{k}$ as in \eqref{empiricals}.
\end{prop}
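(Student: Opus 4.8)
The plan is to exploit the fact that the node $\nn=\oo$ contributes \emph{exactly} the limiting Dirichlet predictive, so that the whole statement reduces to showing that the mixture weight on $\oo$ tends to $1$. Since $|\oo|=0$, the $\nn=\oo$ summand in \eqref{predictive k+1} equals $\tfrac{\theta}{\theta+k}P_{0}+\tfrac{k}{\theta+k}P_{k}=:\nu$, the claimed limit. Writing $\mu_{t}$ for the left-hand side of \eqref{predictive k+1} and $\mu_{\nn}$ for the probability measure in brackets indexed by $\nn$, and using $\sum_{\nn\in L(\MM)}p^{(k)}_{t}(\MM,\nn)=1$ together with $\mu_{\oo}=\nu$, one gets the telescoping identity
\begin{equation}\nonumber
\mu_{t}-\nu=\sum_{\nn\in L(\MM),\,\nn\neq\oo}p^{(k)}_{t}(\MM,\nn)\,(\mu_{\nn}-\nu),
\end{equation}
whence $\|\mu_{t}-\nu\|_{\mathrm{TV}}\le 2\sum_{\nn\neq\oo}p^{(k)}_{t}(\MM,\nn)=2\big(1-p^{(k)}_{t}(\MM,\oo)\big)$, as each $\|\mu_{\nn}-\nu\|_{\mathrm{TV}}\le2$. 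Thus it suffices to prove $p^{(k)}_{t}(\MM,\oo)\to1$ as $t\to\infty$.

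Next I would establish this for the bare weights \eqref{death process transitions from M}, i.e.\ the case $k=0$ with no updates. By \eqref{death process transitions}, $p_{\mm,\oo}(t)=p_{|\mm|,0}(t)\,\mathrm{HG}(\mm;\mm,|\mm|)=p_{|\mm|,0}(t)$, since the hypergeometric factor equals $1$ at $\nn=\oo$. The level chain $|\cdot|$ underlying the death process is a pure-death continuous-time Markov chain on the finite set $\{0,1,\dots,\max_{\mm\in\MM}|\mm|\}$ with strictly positive rates $\lambda_{m}=m(\theta+m-1)/2$ for $m\ge1$ and absorbing state $0$; hence it reaches $0$ in finite time almost surely and $p_{|\mm|,0}(t)\to1$ (cf.\ Lemma~1 in the Supplementary Material). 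Therefore $p_{t}(\MM,\oo)=\sum_{\mm\in\MM}w_{\mm}p_{|\mm|,0}(t)\to\sum_{\mm}w_{\mm}=1$, while for every $\nn\neq\oo$ one has $|\nn|\ge1$, so $p_{|\mm|,|\nn|}(t)\to0$ and consequently $p_{t}(\MM,\nn)=\sum_{\mm\ge\nn}w_{\mm}p_{\mm,\nn}(t)\to0$; the bare weights concentrate on $\oo$.

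To pass from $k=0$ to general $k$ I would track ratios, which cancel the normalising constant in \eqref{weight update}. Iterating \eqref{weight update} with $g_{j}(\nn):=p(y^{j+1}_{T+t}\mid y^{1:j}_{T+t},\nn)$ from \eqref{k+1 conditional urn predictive} gives, for each $\nn\neq\oo$,
\begin{equation}\nonumber
\frac{p^{(k)}_{t}(\MM,\nn)}{p^{(k)}_{t}(\MM,\oo)}=\frac{p_{t}(\MM,\nn)}{p_{t}(\MM,\oo)}\prod_{j=0}^{k-1}\frac{g_{j}(\nn)}{g_{j}(\oo)}.
\end{equation}
For the realised sample the product is a finite constant: $g_{j}(\oo)>0$ almost surely, since its numerator $\theta p_{0}(y^{j+1}_{T+t})+\sum_{l\le j}\delta_{y^{l}_{T+t}}(\{y^{j+1}_{T+t}\})$ is strictly positive because every observation lies in the support of $P_{0}$, and each $g_{j}(\nn)$ is finite. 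Combining with the previous paragraph, the right-hand side tends to $0$ for every $\nn\neq\oo$; as $L(\MM)$ is finite and $p^{(k)}_{t}(\MM,\oo)\big(1+\sum_{\nn\neq\oo}p^{(k)}_{t}(\MM,\nn)/p^{(k)}_{t}(\MM,\oo)\big)=1$, this forces $p^{(k)}_{t}(\MM,\oo)\to1$, which completes the argument.

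The main obstacle I anticipate lies in this last step: the updates \eqref{weight update} inject the random current observations $Y^{1:k}_{T+t}$, and one must rule out that they counteract the death-process concentration on $\oo$. The ratio device dispatches this cleanly provided $g_{j}(\oo)$ stays bounded away from $0$, which is precisely where the almost-sure qualifier enters, through positivity of the baseline density at the sampled points. By contrast, the death-process convergence driving everything, being a finite-state absorption statement, is routine.
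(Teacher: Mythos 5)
Your proof is correct and follows essentially the same route as the paper's: both reduce the claim to showing that the mixture weights concentrate on the node $\oo$, whose component is exactly the limiting P\'olya urn, via absorption of the one-dimensional death process at $0$. The only difference is presentational -- the paper bounds the total variation gap by $\sum_{\nn}p^{(k)}_{t}(\MM,\nn)\,|\nn|/(\theta+|\nn|+k)$ and treats the concentration of the \emph{updated} weights as immediate, whereas your ratio argument for passing from $p_{t}$ to $p^{(k)}_{t}$ makes that step explicit and is a welcome addition of rigor.
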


Here $\overset{\text{TV}}{\longrightarrow}$ denotes convergence in total variation distance, and the statement is almost sure with respect to the probability measure induced by the FV model on the space of measure-valued temporal trajectories. A heuristic interpretation of the above result is that, when the lag between the last and the current data collection point diverges, the information given by past observations $\YY_{0:T}$ becomes obsolete, and sampling from \eqref{predictive k+1} approximates sampling from the prior P\`olya urn \eqref{DPpredictive}. This should be intuitive, as very old information, relative to the current inferential goals, should have a negligible effect. 

Unsurprisingly, it can be easily proved that an analogous result holds for the distribution of the induced partition, which converges to the EPPF of the {Dirichlet process} as $t\rightarrow \infty$. The proof follows similar lines to that of Proposition \ref{lemma_asymp}. In the conveyor belt metaphor, as $t$ increases all dishes on the conveyor belt have been removed due to food spoilage, before the next customer comes in.

The following Proposition shows that when $k\rightarrow \infty$ in \eqref{predictive k+1}, we recover the law of $X_{T+t}$ given $\YY_{0:T}$ as de Finetti measure.

\begin{prop}\label{lemma_obs}
Under the hypotheses of Proposition \ref{prop: main}, we have
\[
\L\big(Y_{T+t}^{k+1}|\YY_{0:T}, Y_{T+t}^{1:k}\big) \Rightarrow P^{*}_{T+t}, \quad \quad 
a.s., \text{as $k \to \infty$},
\]
where $P^* \sim \L( X_{T+t}|\YY_{0:T} )$.
\end{prop}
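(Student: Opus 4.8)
The plan is to read off the limit directly from the explicit predictive formula \eqref{predictive k+1}, reducing everything to the almost sure weak convergence of the empirical measure $P_k$ of the current sample, which is then identified via de Finetti's theorem. First I would record two elementary structural facts about the index set. The set $L(\MM)$ in \eqref{L(M)} is finite, and every $\nn\in L(\MM)$ satisfies $|\nn|\le|\mm^{*}|$, where $\mm^{*}$ is the (unique) top node of $\MM$, i.e.\ the vector of multiplicities of $(y_1^*,\dots,y_\dist^*)$ in $\YY_{0:T}$; hence, with $\YY_{0:T}$ fixed, $\bar m:=\sup_{\nn\in L(\MM)}|\nn|<\infty$ is deterministic. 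Consequently the coefficients of the $P_0$ and $P_\nn$ components in \eqref{predictive k+1}, namely $\theta/(\theta+|\nn|+k)$ and $|\nn|/(\theta+|\nn|+k)$, are bounded by $(\theta+\bar m)/(\theta+k)\to0$ uniformly in $\nn$, while the coefficient $k/(\theta+|\nn|+k)$ of $P_k$ tends to $1$ uniformly.

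Next I would test \eqref{predictive k+1} against an arbitrary bounded continuous $f:\Y\to\mathbb{R}$. Since $\int f\,\d P_k$ does not depend on $\nn$, integrating \eqref{predictive k+1} gives
\begin{equation}\nonumber
\int f\,\d\L\big(Y_{T+t}^{k+1}\mid\YY_{0:T},Y_{T+t}^{1:k}\big)
= R_k + \Big(\sum_{\nn\in L(\MM)}p^{(k)}_t(\MM,\nn)\,\tfrac{k}{\theta+|\nn|+k}\Big)\int f\,\d P_k,
\end{equation}
where the remainder $R_k$ collects the $P_0$ and $P_\nn$ contributions and satisfies $|R_k|\le\|f\|_\infty(\theta+\bar m)/(\theta+k)\to0$, because the weights $p^{(k)}_t(\MM,\nn)$ sum to one. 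For the same reason the bracketed factor converges to $1$. Thus the predictive law and $P_k$ share the same weak limit, and it only remains to identify $\lim_k\int f\,\d P_k$.

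This identification is the de Finetti step, which I expect to be the only genuinely delicate point. Conditionally on $\YY_{0:T}$, model \eqref{model} makes $(Y_{T+t}^i)_{i\ge1}$ conditionally i.i.d.\ given $X_{T+t}$, with $X_{T+t}$ distributed according to \eqref{oldresult}, i.e.\ $\L(X_{T+t}\mid\YY_{0:T})$; write $P^{*}_{T+t}:=X_{T+t}$ for the resulting directing random measure. Conditioning further on $X_{T+t}=x$, the $Y_{T+t}^i$ are genuinely i.i.d.\ from $x$, so Varadarajan's theorem gives $P_k\Rightarrow x$ almost surely whenever $\Y$ is Polish; integrating against $\L(X_{T+t}\mid\YY_{0:T})$ then yields $P_k\Rightarrow P^{*}_{T+t}$ a.s. Combined with the previous display this gives $\int f\,\d\L(Y_{T+t}^{k+1}\mid\YY_{0:T},Y_{T+t}^{1:k})\to\int f\,\d P^{*}_{T+t}$ a.s., and since $P^{*}_{T+t}\sim\L(X_{T+t}\mid\YY_{0:T})$ by construction, the claim follows.

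The main care needed is twofold and purely measure-theoretic: I must carry the randomness of the weights $p^{(k)}_t(\MM,\nn)$ and that of $P_k$ on a single almost sure event, and I must upgrade "a.s.\ convergence for each fixed $f$" to "a.s.\ weak convergence" by restricting attention to a fixed countable convergence-determining family of bounded continuous functions (available since $\Y$ is Polish) and intersecting the corresponding null sets. Everything else is the uniform bounding of the vanishing coefficients already made explicit above.
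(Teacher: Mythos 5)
Your proposal is correct and follows essentially the same route as the paper: exploit the finiteness of $L(\MM)$ to show the $P_0$ and $P_{\nn}$ contributions in \eqref{predictive k+1} vanish while the coefficient of $P_k$ tends to one, then identify the limit of $P_k$ as the directing random measure $P^{*}_{T+t}\sim\L(X_{T+t}\mid\YY_{0:T})$ via de Finetti's theorem. Your version merely spells out more carefully the details the paper leaves implicit (the Varadarajan/Polish-space step and the countable convergence-determining class), which is a reasonable elaboration rather than a different argument.
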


Here $P^{*}$ is a random measure with the same distribution as the FV-DDP at time $T+t$ given only the past information $\YY_{0:T}$. Recall for comparison that the same type of limit for \eqref{DPpredictive} yields
\begin{equation}\nonumber
\L(Y_{k+1}| Y_{1},\ldots,Y_{k})\Rightarrow P^{*}, \quad \quad 
P^{*}\sim \Pi_{\alpha},\quad \text{as }k\rightarrow \infty,
\end{equation} 
where $\Pi_{\alpha}$ is the de Finetti measure of the sequence and $P^{*}$ is sometimes called the directing random measure.


\section{Illustration}\label{sec: illustration}

We illustrate predictive inference using FV-DDPs, based on Proposition \ref{prop: main}. Besides the usual prior specification regarding models based on the {Dirichlet process}, that concern the choice of the total mass $\theta$ and of the baseline distribution $P_{0}$, here we can also introduce a parameter $\sigma>0$ that controls the speed of the DDP. This acts as a time rescaling, whereby the data collection times $t_{i}$ are rescaled to $\sigma t_{i}$. This additional parameter provides extra flexibility for estimation, as it can be used to adapt the prior to the correct time scale of the underlying data generating process.

\subsection{Synthetic data}\label{sec: sim data}

We consider data generated by the model
\begin{equation}\nonumber
\begin{aligned}
Y_t \sim&\, \frac{1}{2}\text{Po}(\mu_{t}^{-1},0) + \frac{1}{2}\text{Po}(\nu_{t}^{-1},5),\\
\mu_t =&\, \mu_{t-1}+\varepsilon_{t}, \quad \varepsilon_{t}\sim \text{Exp}(1), \\
\nu_t =&\, \nu_{t-1}+\eta_{t}, \quad \eta_{t}\sim \text{Exp}(1), \eta_t \indep \epsilon_t
\end{aligned}
\end{equation} 
where Po$(\lambda, b)$ denotes a $b$-translated Poisson distribution with parameter $\lambda$, and where $\mu_{0}^{-1} = \nu_{0}^{-1} =5$, for $t = 0,1,2,\dots$.
We collect $15$ observations at each $t\in\{0,\ldots,15\}$ and consider one-step-ahead predictions based on the first 5 and 15 data collection times.

We fit the data by using a FV-DDP model as specified in \eqref{model}, with the following prior specification.
We consider two choices for $P_{0}$, a Negative Binomial with parameters $(2,0.5)$ and a Binomial with parameters $(99,0.3)$, which respectively concentrate most of their mass around small values and around the value 30.  {We consider a uniform prior on $\theta$ concentrated on the points $\{.5,1,1.5, \dots, 15\}$}. A  continuous prior could also be envisaged, at the cost of adding a Metropolis--Hastings step in the posterior simulation, which we avoid here for the sake of computational efficiency.
Similarly, for $\sigma$ we consider a uniform prior on the values $\{ 0.01, 0.1, 0.3, 0.5, 0.7, 0.9, 1.5 \}$. 
The estimates are obtained by means of $500$ replicates of \eqref{predictive k+1} of $1000$ observations each, using the approximate method outlined in Algorithm \ref{alg: approx} with $10000$ Monte Carlo iterates.
We also compare the FV-DDP estimate with that obtained using the DDP proposed in \cite{GMR16}. This is constructed from the stick-breaking representation \eqref{SB} by letting 
\begin{equation}\nonumber
V_{i}(t_{n})\sim c \delta_{V'}+(1-c)\delta_{V_{i}(t_{n-1})}, \quad \quad V^{'}\sim \text{Beta}(1,\theta).
\end{equation} 
in \eqref{SB} and keeping the locations $Y_{i}$ fixed.  We let the resulting DDP be the mixing measure in a time-dependent mixture of Poisson kernels, which provides additional flexibility to this model with respect to our proposal. Furthermore, we give the competitor model a considerable advantage by training it also with the data points collected at times $6$ and $7$, which provide information on the prediction targets, and by centering it on the Negative Binomial with parameters $(2,0.5)$, rather than on the above mentioned mixture, which puts mass closer to where most mass of the true pmf lies.

{Figure \ref{fig:subfig_sigma_prior} shows the results on one-step-ahead prediction with 
15 collection times. 
The posterior of $\sigma$ (not shown) concentrates most of the mass on points 0.7 and 0.9, which leads to learning the correct time scale for prediction, resulting in an accurate estimate of the true pmf. The credible intervals are quite wide, and a better precision may be achieved by increasing the number of time points at which the data are recorded. }

\begin{figure}[h]
\centering
\includegraphics[width=.6\textwidth]{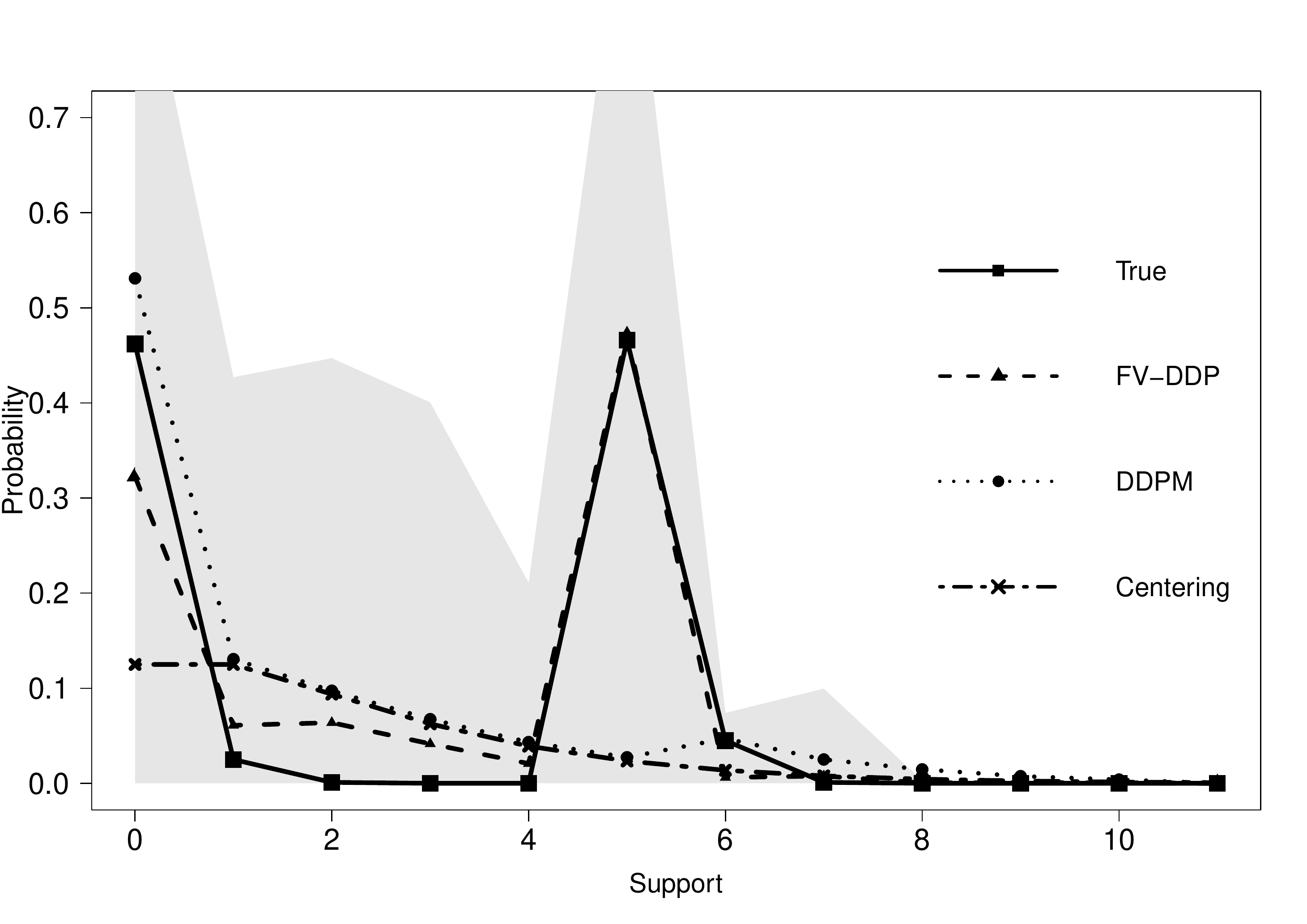}
\begin{quote}
\caption{\scriptsize {One-step-ahead prediction and $95 \%$ pointwise credible intervals, based on 15 data collection times.}}
   \label{fig:subfig_sigma_prior}
\end{quote}
\end{figure}

We compare the previous results with those obtained by choosing $\sigma$ via out-of-sample validation. This is done here using times 0 to 4 as training and time 5 as test, whereby for each $\sigma \in \{.0001,.001,.01,.1,0.5,1,1.5\}$ we compute the sum of absolute errors (SAE) between the FV-DDP posterior predictive mean and the true pmf. These are shown in Table \ref{table 1}, leading to choose $\sigma=.01$. 

\begin{table}[h!]
\begin{center}
\begin{tabular}{cccccccc}\hline
$\sigma$  &.0001 &.001 & .01 & .1 & .5 & 1 & 1.5 \\\hline
\text{SAE}    & .1410 & .1345 & \textbf{.1064} & .1301 & .1261 & .1595 & .1847\\\hline
\end{tabular}
\end{center}
\begin{quote}
\caption{\scriptsize Sum of the absolute error between predicted and true pmf at time $5$ for different values of $\sigma$.}
\label{table 1}
\end{quote}
\vspace{-5mm}
\end{table}

Table \ref{table 2} shows the posterior weights of relevant values of $\theta$ among those with positive prior mass, for the above mentioned choices of $P_{0}$ and using the chosen value of $\sigma$.
The model correctly assigns all posterior probability to the Negative Binomial centering (Binomial not reported in the table), which moves mass towards smaller values as time increases.

\begin{table}[h!]
\begin{center}
\begin{tabular}{ccccc}\hline
$\theta$    & 1 & 1.5 & 2 & 3   \\\hline
\text{NegBinom}    & .5644 & .001694 & .04702 & 0.3868  \\\hline
\end{tabular}
\end{center}
\begin{quote}
\caption{\scriptsize Relevant posterior weights of $\theta$}\label{table 2}
\end{quote}
\vspace{-5mm}
\end{table}

Figure \ref{fig:subfig_1} shows the results in this case for the one- and two-step-ahead predictions given only 5 data collection times. The true pmf is correctly predicted by the FV-DDP estimate even in this short horizon scenario, and the associated $95\%$ pointwise credible intervals are significantly sharper if compared to Figure \ref{fig:subfig_sigma_prior}, obtained with a longer horizon. The prediction based on the alternative DDP mixture does not infer correctly the target, leading to an associated normalised $ \ell_1$ distance from the true pmf of $12.72 \%$ and $12.84\%$, compared to $4.95\%$ and $4.90 \%$ for the FV- DDP prediction.

\begin{figure}[h]
\centering
\includegraphics[width=.49\textwidth]{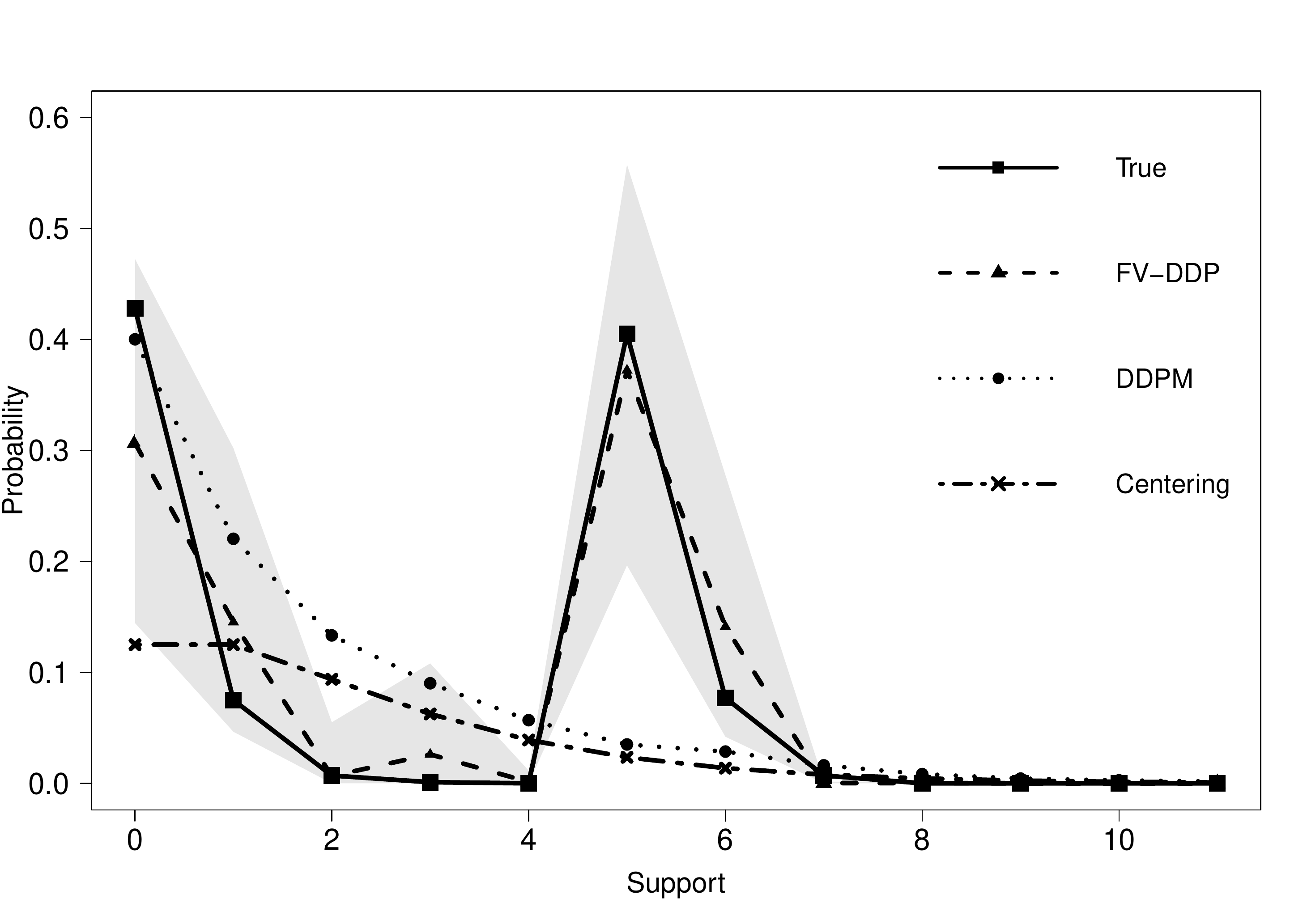}
\includegraphics[width=.49\textwidth]{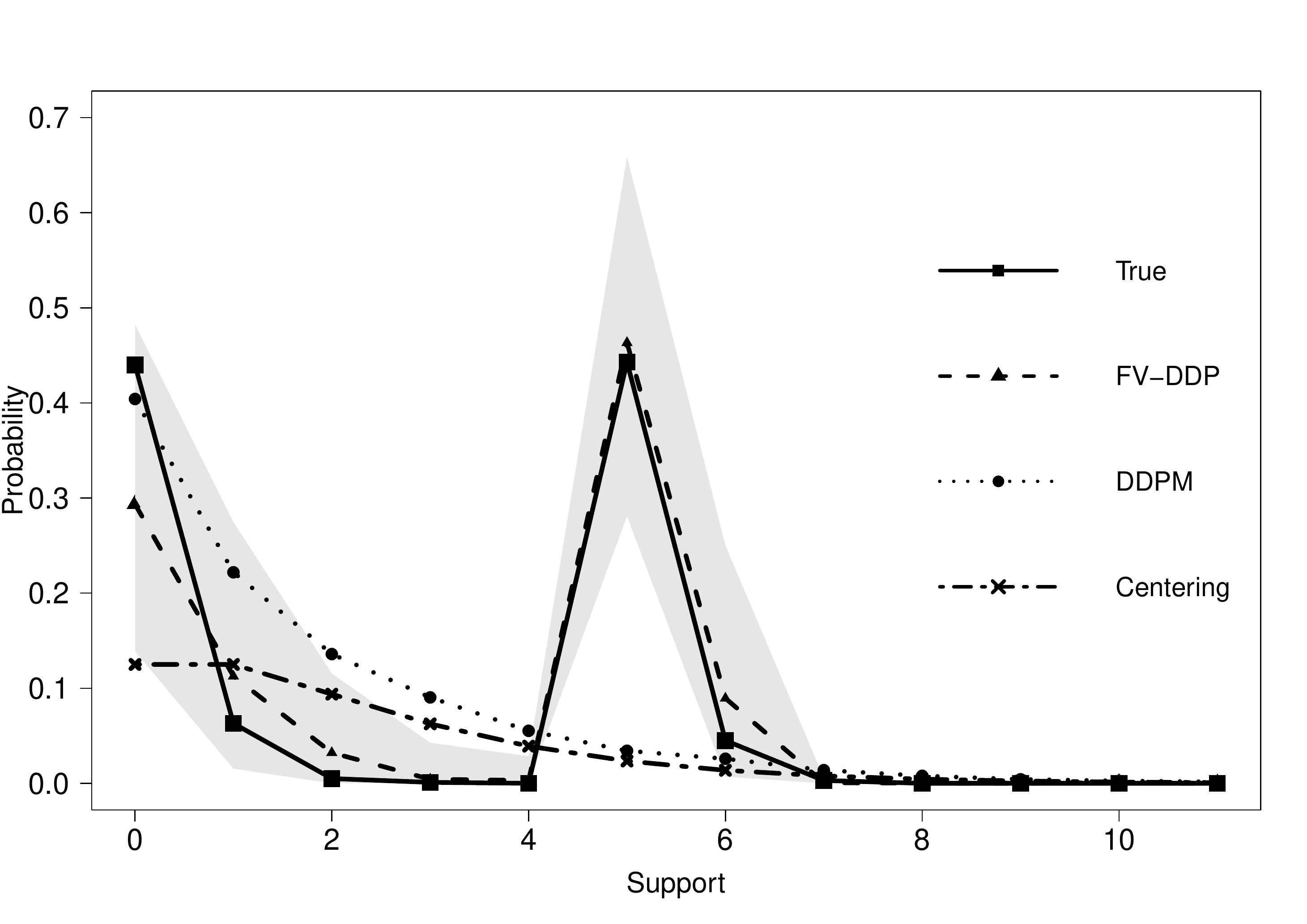}
\begin{quote}
\caption{\scriptsize One- (left) and two-step-ahead prediction (right) based on 5 data collection times, with $95 \%$ pointwise credible intervals.}
   \label{fig:subfig_1}
\end{quote}
\end{figure}


\subsection{Karnofsky score data}

We consider the dataset \emph{hodg} used in \cite{Klein}, which contains records on the time to death or relapse and the Karnofsky score for 43 patients with a lymphoma disease. The Karnofsky score (KS) is an index attributed to individual patients, with higher values indicating a better prognosis. 

In the framework of model \eqref{model}, we take the times of death or relapse as collection times and let the KS of the survivors at each time be the data. We aim at predicting the future distribution of the KS among the patients who are still in the experiment at that time, which would be an indirect assessment of the effectiveness of the score in describing the patients' prognosis. We also include censored observations (patients leaving the experiment for reasons different from death or relapse), without having them trigger a collection time. {The FV-DDP appears as the ideal modeling tool in this framework since it includes a probabilistic mechanism that accounts for the reduced number of observations through different time points. }

We train the model up to $42$, $108$ and $406$ days after the start of the experiment, and we make predictions $28$, $112$ and $144$ days ahead, respectively.
As regards the prior, we put a uniform distribution on the observed scores (note that new score values cannot appear along the experiment) and we uniformly randomize $\theta$ over $\{.5,1,1.5, \dots, 15\}$, analogously to Section \ref{sec: sim data}. Given the results of the previous subsection for different approaches to selecting $\sigma$, here, after transforming the lags in annual, we proceed by selecting $\sigma$ for each value of $\theta$ by maximizing the probability that the death process makes the right number of transitions in the desired laps of time.
Some of the selected values for $\sigma_1,\sigma_{2}, \sigma_3$ for the three different trainings, depending on $\theta$, are shown in Table \ref{table 3}.

\begin{table}[h!]
\begin{center}
\begin{tabular}{cccccccc}\hline
$\theta$    & .5 & 1 & 1.5 & $\cdots$& 29 & 29.5 & 30  \\\hline
$\sigma_1$   & 0.4947 & 0.4913  & 0.4885 & $\cdots$& 0.3235 & 0.3266 & 0.3228 \\
$\sigma_2$   & 0.6059 & 0.6014 & 0.5696 &$\cdots$& 0.3684 & 0.3130 & 0.3361 \\
$\sigma_3$   & 0.6149 & 0.6150  &0.5789 &$\cdots$& 0.3063 & 0.3018 & 0.2901 \\\hline
\end{tabular}
\end{center}
\begin{quote}
\caption{\scriptsize Choice of $\sigma$ for some values of $\theta$ for the three trainings.}\label{table 3}
\end{quote}
\vspace{-5mm}
\end{table}

Figure \ref{fig: score} shows the three predictions of the scores distribution.
Coherently with the intuition, as the experiment goes by, individuals with higher KS become predominant: from $70$ to $230$ days the predicted weight associated to a score of $90$ increases of more than $10 \%$, and similarly for $100$. However the distribution of the scores remains pretty stable, apart from the lowest values, meaning that the highest scoring patients actually had much better prognoses, as showed by the third prediction. 

\begin{figure}[t!]
\centering
\includegraphics[width=.49\textwidth]{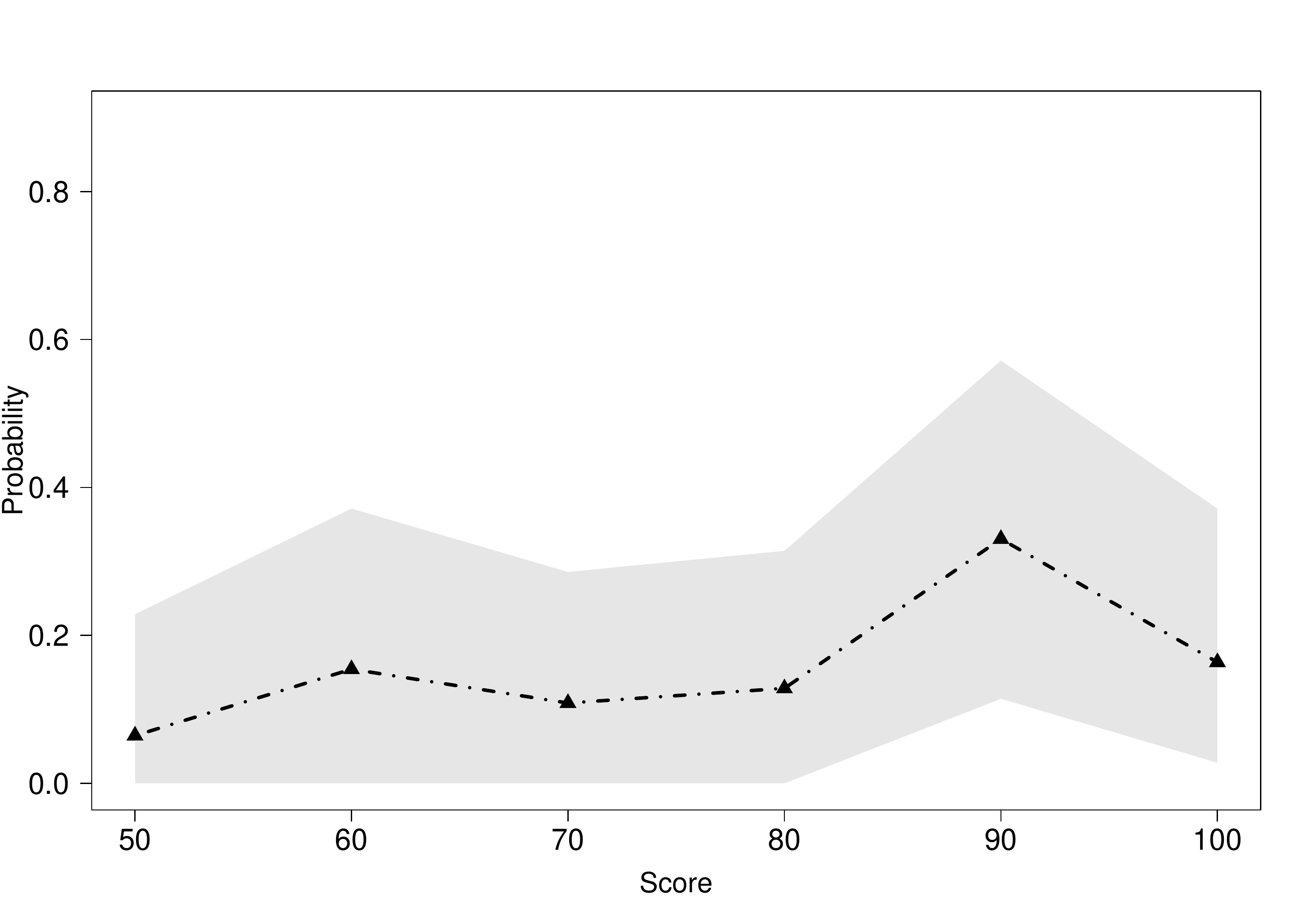} 
\includegraphics[width=.49\textwidth]{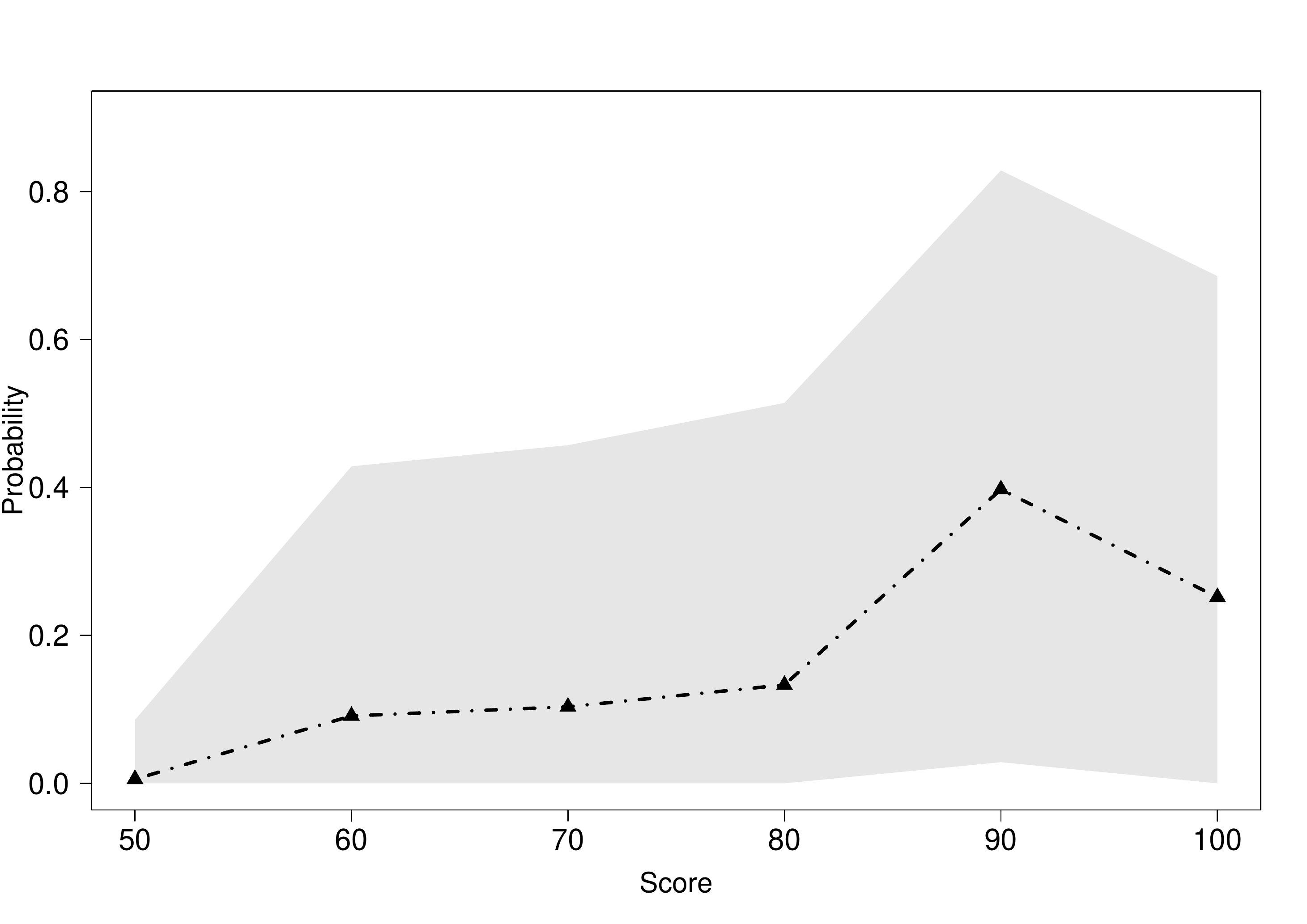} \\
\includegraphics[width=.49\textwidth]{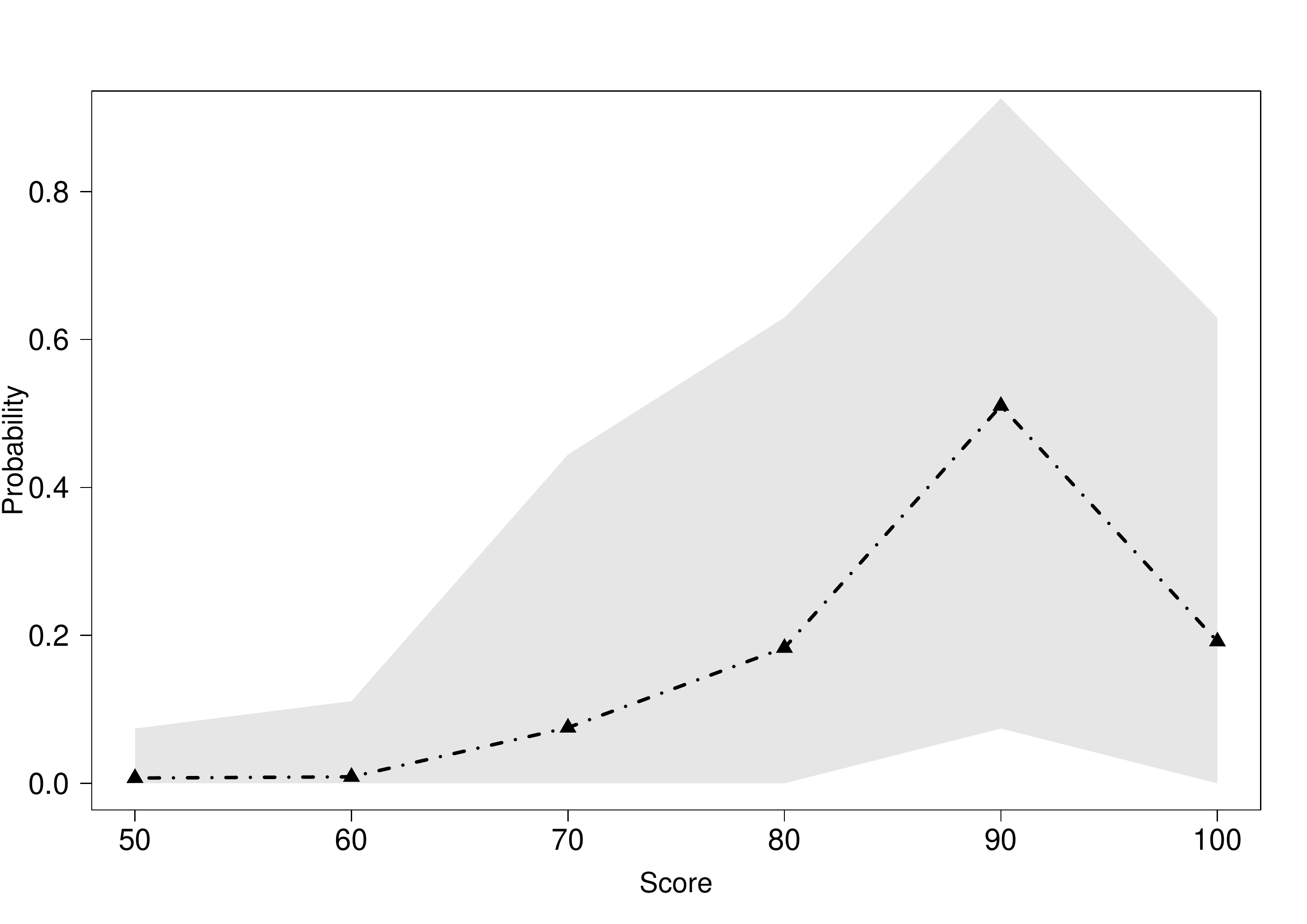}
\includegraphics[width=.49\textwidth]{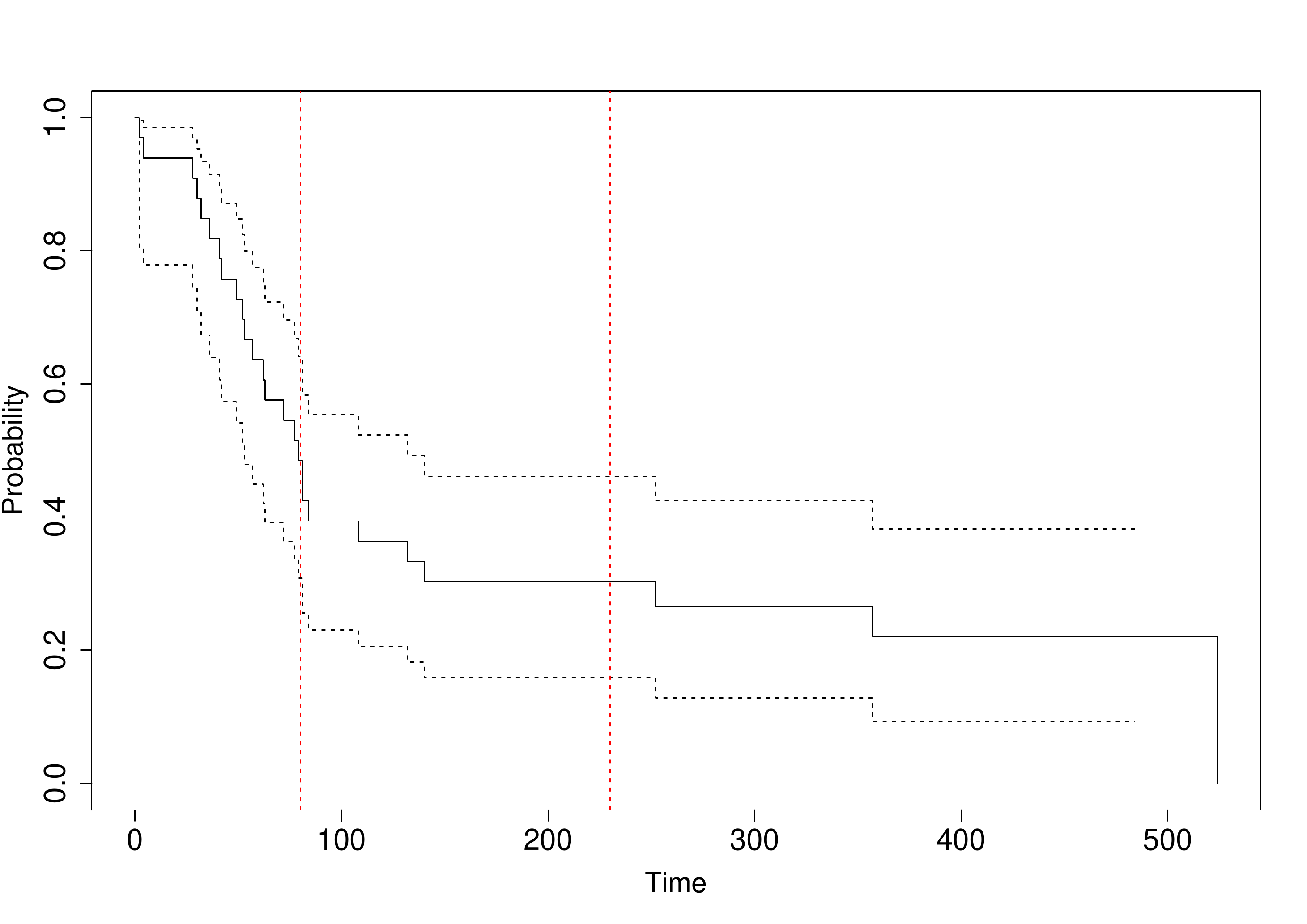}
\begin{quote}
\caption{\scriptsize From top left: pmf prediction at $70$, $230$ and $550$ days after the experiment. Bottom right: Kaplan-Meyer estimate of the survival times up to time $550$.}\label{fig: score}
\end{quote}
\end{figure}

These findings are consistent with the Kaplan-Meyer estimate of the survival function, shown in the bottom right panel, which decreases rapidly between $70$ and $230$ and flattens after that point, implying that the FV-DDP prediction adapted to the periods of quick change in the underlying distribution and periods of relative steady behaviour.


\section{Discussion}

We have derived the predictive distribution for the observations generated by a class of dependent Dirichlet processes driven by a Fleming--Viot diffusion model, which can be characterised as a time-dependent mixture of P\'olya urns, and described the induced partition structure together with practical algorithms for exact and approximate sampling of these quantities. 
An upside of inducing the dynamics through a FV process is that one can implicitly exploit the rich and well understood underlying probabilistic structure in order to obtain manageable closed-form formulae for the quantities of interest. This ultimately relies on the duality with respect to Kigman's coalescent, which was first used for inferential purposes in \cite{PR14}. 

{The approach we have described yields dependent RPMs with almost surely discrete realisations. While such a feature perfectly fits the specific illustrations we have discussed, it is not suited to draw inferences with continuous data.} An immediate and natural extension of the proposed model, which accommodates continuous outcomes would be to consider dependent mixtures of continuous kernels, whereby the observation $y$ from the RPM at time $t$ becomes a latent variable acting as parameter in a parametric kernel $f(z|y)$. This approach would be in line with the extensive Bayesian literature on semi-parametric mixture models, which has largely used the DP or its various extensions as mixing measure. It remains however a non trivial exercise to derive in this framework the corresponding formulae for prediction, which we will leave for future investigation.


\section*{Acknowledgements}

The second and third authors are partially supported by the Italian Ministry of Education, University and Research (MIUR) through PRIN 2015SNS29B. The second author is also supported by MIUR through ``Dipartimenti di Eccellenza'' grant 2018-2022.
Helpful discussions with Amil Ayoub are gratefully acknowledged by the first author.


\section*{Appendix}

\section*{Additional results}

\begin{lemma}\label{lemma death process transition}
The transition probabilities $p_{|\mm|, |\nn|} (t)$ in \eqref{death process transitions} equal $e^{- \lambda_{|\mm|}t}$ when $\nn=\mm$ and
\begin{equation}\nonumber
\left( \prod_{h = 0}^{|\mm-\nn|-1}\lambda_{|\mm|-h} \right)(-1)^{|\mm-\nn|} \sum_{k = 0}^{|\mm-\nn|}\frac{e^{-\lambda_{|\mm|-k}t}}{\prod_{0 \leq h \leq |\mm-\nn|, h \neq k}(\lambda_{|\mm|-k}- \lambda_{|\mm|-h})},
\end{equation}
when $\oo < \nn \leq \mm$, where $\lambda_n = n(\theta+n-1)/2$.
\end{lemma}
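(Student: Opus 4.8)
The plan is to recognise that $p_{|\mm|,|\nn|}(t)$ is the transition function of a one-dimensional pure-death process and to compute it via Laplace transforms. First I would observe that the total size $|\cdot|$ of the multidimensional death process described after \eqref{oldresult} is itself a time-homogeneous Markov pure-death chain on $\mathbb{Z}_+$: from a configuration $\mm$ the process moves to $\mm-\ee_i$ at rate $m_i(\theta+|\mm|-1)/2$, so the total rate of leaving level $|\mm|=n$ is $\sum_i m_i(\theta+|\mm|-1)/2=|\mm|(\theta+|\mm|-1)/2=\lambda_n$, irrespective of the within-level configuration, and every jump lowers the level by exactly one. Hence the level dynamics are autonomous, and $p_{n,n'}(t)$, the probability of descending from level $n$ to level $n'$ in time $t$, is the transition function of the death chain with rates $\lambda_j=j(\theta+j-1)/2$. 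I would also record that these rates are distinct: $\lambda_{j+1}-\lambda_j=(\theta+2j)/2>0$ for $\theta>0$, so $\lambda_j$ is strictly increasing in $j$; this simple-pole property is exactly what legitimises the partial-fraction step below.

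Next I would set up the forward Kolmogorov equations for fixed starting level $n$, viewing $p_{n,n'}(t)$ as a function of the arrival level $n'\le n$. Entry into $n'$ occurs only from $n'+1$ at rate $\lambda_{n'+1}$ and exit from $n'$ at rate $\lambda_{n'}$, giving $\tfrac{d}{dt}p_{n,n'}=\lambda_{n'+1}p_{n,n'+1}-\lambda_{n'}p_{n,n'}$ for $n'<n$, together with $\tfrac{d}{dt}p_{n,n}=-\lambda_n p_{n,n}$ and initial condition $p_{n,n'}(0)=\delta_{n,n'}$ (Kronecker). Taking Laplace transforms $\hat p_{n,n'}(s)=\int_0^\infty e^{-st}p_{n,n'}(t)\,\d t$ turns this triangular system into the algebraic recursion $\hat p_{n,n}(s)=1/(s+\lambda_n)$ and $\hat p_{n,n'}(s)=\tfrac{\lambda_{n'+1}}{s+\lambda_{n'}}\hat p_{n,n'+1}(s)$, which unwinds to $\hat p_{n,n'}(s)=\big(\prod_{j=n'+1}^{n}\lambda_j\big)\big/\prod_{j=n'}^{n}(s+\lambda_j)$.

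Finally I would invert this transform. Since the poles $-\lambda_j$, $j=n',\dots,n$, are simple and distinct, the Heaviside cover-up expansion gives $\prod_{j=n'}^n (s+\lambda_j)^{-1}=\sum_{j=n'}^n (s+\lambda_j)^{-1}\prod_{i\ne j}(\lambda_i-\lambda_j)^{-1}$, and term-by-term inversion produces $p_{n,n'}(t)=\big(\prod_{j=n'+1}^n\lambda_j\big)\sum_{j=n'}^n e^{-\lambda_j t}\big/\prod_{i\ne j}(\lambda_i-\lambda_j)$. The last step is purely a change of summation index: writing $j=|\mm|-k$ and $i=|\mm|-h$ with $D:=|\mm|-|\nn|=|\mm-\nn|$, the range $j=n',\dots,n$ becomes $k=0,\dots,D$, the prefactor becomes $\prod_{h=0}^{D-1}\lambda_{|\mm|-h}$, and each factor $\lambda_i-\lambda_j$ turns into $-(\lambda_{|\mm|-k}-\lambda_{|\mm|-h})$, contributing an overall sign $(-1)^{D}$ because there are $D$ factors in the denominator; this reproduces exactly the stated expression, while the case $\nn=\mm$ ($D=0$) collapses to $e^{-\lambda_{|\mm|}t}$. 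The only delicate point is this index-and-sign bookkeeping, together with the need to invoke distinctness of the $\lambda_j$ explicitly to justify the simple-pole form; one may alternatively bypass Laplace transforms and verify directly, by induction on $D$, that the claimed formula solves the forward system with the correct initial data.
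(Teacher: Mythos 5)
Your derivation is correct, but it is genuinely different from what the paper does: the paper offers no argument at all for this lemma and simply cites Lemma 4.1 of Papaspiliopoulos, Ruggiero and Span\`o (2016), where the formula is obtained in the context of the dual death process of the Fleming--Viot diffusion. Your proof is a self-contained first-principles computation, and every step checks out. The opening reduction is the right one: since the exit rate from a configuration $\mm$ is $\sum_i m_i(\theta+|\mm|-1)/2=\lambda_{|\mm|}$ and each jump decreases the level by one, the level process is an autonomous pure-death chain, which is also what legitimises the factorisation of $p_{\mm,\nn}(t)$ into a level transition times a hypergeometric choice in \eqref{death process transitions}. The forward system is triangular, the Laplace transform unwinds to $\bigl(\prod_{j=n'+1}^{n}\lambda_j\bigr)/\prod_{j=n'}^{n}(s+\lambda_j)$, and your observation that $\lambda_{j+1}-\lambda_j=(\theta+2j)/2>0$ guarantees simple poles, so the Heaviside expansion and term-by-term inversion are valid; the final reindexing $j=|\mm|-k$, $i=|\mm|-h$ produces exactly the stated sign $(-1)^{|\mm-\nn|}$ from the $|\mm-\nn|$ reversed factors in each denominator (equivalently, the expression is the hypoexponential density/distribution of a sum of independent exponentials with distinct rates). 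What your route buys is transparency and independence from the cited reference; what the paper's route buys is brevity and consistency with the source where the death process is first constructed. No gaps.
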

\begin{proof}
See \cite{PRS16}, Lemma 4.1.
\end{proof}

\begin{lemma}\label{eppf_lemma}
Assume \eqref{model} and \eqref{oldresult}. Let $I_1, \dots, I_q$, with $q \in \mathbb{N}$ be a partition of $\{1, \dots, q\}$ and let $m_j = |I_j|$. Then the distribution of the partition of $\{1, \dots, q\}$ induced at time $T+t$ is
\begin{equation}\nonumber
\begin{aligned}
p(m_1, \dots , &m_q) =\, \sum_{j = 0}^{\hat{q} } \binom{\hat{q} }{j}\int \dots \int \, \Bigg[ \prod_{s = 0}^{j-1} \sum_{i_j \in \mathcal{K} \backslash \{i_0, \dots , i_{j-1} \} }C_{i_j,k_s} \\
& \prod_{h = 1}^{m_{s+1}-1}\left[ C_{i_j,k_s+h} + B_{k_s+h}\frac{h}{k_s+h} \right] \\
& \quad \prod_{s = j}^{q-1} A_{k_s}\prod_{h=1}^{m_{s+1}-1}B_{k_s+h}\frac{h}{k_s+h} \Bigg] \d P_{i_0}\dots \d P_{i_{j-1}}\d P_0 \dots \d P_0
\end{aligned}
\end{equation} 
with $A_k$, $B_k$ and $C_{i,k}$ as in \eqref{pred_notation}, $\hat{q} =\min \{q, \dist \}$ and $k_i = \sum_{j = 1}^im_j$, while $\mathcal{K} = \{ 1, \dots, \dist \}$ and $P_{i_l}$ is the empirical of past values excluding the already {observed ones}, i.e it is the empirical of the past data points denoted by the set $\mathcal{K} \backslash \{i_0, \dots, i_{l-1} \}$.
\end{lemma}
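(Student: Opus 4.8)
The plan is to read the partition law directly off the one-step predictive rule \eqref{pred_notation} of Proposition \ref{prop: main}, in exact analogy with the way \eqref{DP EPPF} is obtained from \eqref{DPpredictive}, while tracking the additional bookkeeping created by the conveyor belt. Since the sample $Y_{T+t}^{1},Y_{T+t}^{2},\dots$ is exchangeable given $\YY_{0:T}$ (it is a mixture of i.i.d.\ sequences, mixed over the posterior law of $X_{T+t}$), the probability of any set partition depends only on its block sizes, so I may compute it for a single convenient representative. I would use the representative in which blocks are filled consecutively in their order of appearance: observations $1,\dots,m_1$ form block $1$, observations $m_1+1,\dots,m_1+m_2$ form block $2$, and so on, writing $k_s=\sum_{j=1}^{s}m_j$ for the number of observations already placed when block $s+1$ opens. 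The partition probability is then the product of the successive conditional probabilities supplied by \eqref{pred_notation}, averaged over the unobserved values and the latent multiplicity vector $\nn$.

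The core step is to classify each block by the source of its seeding (first) member. By \eqref{pred_notation} a block either opens from $P_0$, contributing the factor $A_{k_s}$, or opens as a copy of a past atom $y_i^*$, contributing $C_{i,k_s}$; the $B_{k_s}P_{k_s}$ term can never open a new block. The outer sum over the number $j$ of past-seeded blocks, the binomial coefficient, and the inner sum $\sum_{i_s\in\mathcal{K}\setminus\{i_0,\dots,i_{s-1}\}}C_{i_s,k_s}$ all record how many and which blocks are seeded from the past and which past atoms anchor them, the bound $j\le\hat q=\min\{q,\dist\}$ reflecting that distinct past-seeded blocks demand distinct atoms. The decisive observation concerns the growth factors: a block grows by one unit in two structurally different ways depending on its seed. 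A $P_0$-seeded block, whose value a.s.\ differs from every $y_i^*$ and from every other fresh draw, can only grow by copying a current observation, contributing $B_{k_s+h}\,h/(k_s+h)$ at its $h$-th growth step; a block anchored at $y_i^*$ can grow either by re-drawing $y_i^*$ from the conveyor, contributing $C_{i,k_s+h}$, or by copying it from the current sample, contributing $B_{k_s+h}\,h/(k_s+h)$, so its growth factor is the sum $C_{i,k_s+h}+B_{k_s+h}\,h/(k_s+h)$. This is precisely the contrast between the two bracketed products in the statement.

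It then remains to integrate out the values. Unlike in the purely exchangeable case, the coefficients $A_k,B_k,C_{i,k}$ are path-dependent: they come from the weight update \eqref{weight update}, which reweights each latent $\nn$ by the likelihood that it produced the realised observation, so they are genuine functions of the values seen so far. Consequently the per-block factors must be integrated against the laws of the seeding values, namely $\d P_0$ for each $P_0$-seeded block (a fresh nonatomic draw whose density enters the update) and a weighted sum over the remaining distinct past atoms for each past-seeded block, encoded by the empiricals $P_{i_l}$ on $\mathcal{K}\setminus\{i_0,\dots,i_{l-1}\}$. Collecting, for every block, the opening factor, the $m_{s+1}-1$ growth factors and these integrals, and letting $j$ range over $0,\dots,\hat q$, yields the displayed expression.

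The main obstacle will be the combinatorial bookkeeping rather than any single estimate: I must justify the block-by-block factorisation even though the normalising constants in \eqref{weight update} evolve along the sequence, keep the distinctness constraints on the seeding atoms consistent across the nested sums, and verify that the count of type assignments is correct. A useful internal check, which I would carry out, is that conditioning on the empty conveyor ($\nn=\oo$, i.e.\ all past multiplicities set to zero) kills every term but $j=0$ and collapses the formula to the Dirichlet EPPF \eqref{DP EPPF}, and that summing $p(m_1,\dots,m_q)$ over all partitions of a sample of size $q$ returns one.
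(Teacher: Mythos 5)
Your proposal is correct and follows essentially the same route as the paper's own proof: both read the partition law off the sequential predictive rule \eqref{pred_notation}, classify each block by whether its seed comes from $P_0$ (factor $A_{k_s}$, growth only via $B_{k_s+h}h/(k_s+h)$) or from a past atom $y_i^*$ (factor $C_{i,k_s}$, growth via $C_{i,k_s+h}+B_{k_s+h}h/(k_s+h)$), invoke exchangeability to place the $j$ past-seeded blocks first and multiply by $\binom{\hat q}{j}$, and then integrate out the seeding values. The additional consistency checks you mention (collapse to \eqref{DP EPPF} when $\nn=\oo$, normalisation) are not in the paper but are sensible sanity checks.
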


\begin{proof}
We want to compute $\P \left( \mathcal{P}_m = \left \{ \mathcal{I}_1, \dots , \mathcal{I}_q \right\} \right)$ where $\sum_{j = 1}^qm_j = m$. For simplicity we consider the vector $(k_0, k_1,k_2, \dots, k_q) = (0, m_1, m_1+ m_2, \dots , \sum_{j = 1}^qm_j )$.
Referring to \eqref{pred_notation}, conditioning on the new observations, for any group $m_{s+1}$ we have two possibilities:
\begin{itemize}
\item similarly to the DP case, with probability $A_{k_s}$ the new value $z^*$ comes from $P_0$ and the next $m_{s+1}-1 = k_{s+1}-k_s - 1$ observations will be equal to $z^*$ with probability $\prod_{h=1}^{m_{s+1}-1}B_{k_s+h}\frac{h}{k_s+h}$;

\item with probability $C_{i,k_s}$ the new value $z^*$ is $y^*_i$ and the next $m_{s+1}-1 = k_{s+1}-k_s - 1$ observations will be equal to $z^*$ with probability $\prod_{h=1}^{m_{s+1}-1} \left[ C_{i,k_s+h} + B_{k_s+h}\frac{h}{k_s+h} \right]$.
\end{itemize}
In the second case we have to sum over all different past observations $i$ that were not taken into consideration in the groups before. 
Then, by exchangeability, we can assume that the first $j$ groups sample a new observation from $P_{\nn}$, i.e.~the past observations, while the others from $P_0$. Defining $\hat{q} =\min \{q, \dist \}$, we have that the probabilities of multiplicities $m_1, \dots , m_q$ with $j$  groups from $P_{\nn}$ and $\hat{q}  - j $ from $P_0$, still conditioned on the new observations, is
\begin{equation}\nonumber
\begin{aligned}
\prod_{s = 0}^{j-1} \sum_{i_j \in \mathcal{K} \backslash \{i_0, \dots , i_{j-1} \} }&C_{i_j,k_s} \prod_{h = 1}^{m_{s+1}-1}\left[ C_{i_j,k_s+h} + B_{k_s+h}\frac{h}{k_s+h} \right]\\
& \prod_{s = j}^{q-1} A_{k_s}\prod_{h=1}^{m_{s+1}-1}B_{k_s+h}\frac{h}{k_s+h}
\end{aligned}
\end{equation}  
where we have used $i_0, i_1, \dots, i_j$ to highlight that we cannot consider twice the same past observation. The latter in turn is the same as considering the first $j$  groups from $P_n$ and the last $\hat{q}  - j $ from $P_0$ times the binomial coefficient $\binom{\hat q}{j}$. Then the thesis is obtained by integrating out $j$ and the new observations.

\end{proof}

\section*{Proof of Propositions}

\subsection*{Proof of Proposition 1}

In this proof we use the same notation of \cite{BJQ12} and denote by $G(t)$ the FV-DDP, i.e. $G(t) = X_t$. We also emphasise the elementary event $\omega\in \Omega$ by writing $G(t,\omega)$. By Eq.~3 in \cite{BJQ12}, it suffices to show that for $\epsilon > 0$, $N \in \mathbb{N}$ and $(t_1, \dots , t_N) \in \mathbb{R}_+^N$ we have
\begin{equation}\label{supp}
\P \left\{ \omega \in \Omega \, : \,  \left[ G(t_i,w)(A_0), \dots , G(t_i,w)(A_k) \right] \in B(\textbf{s}_{t_i},\epsilon), i = 1,\dots, N \right\} > 0. 
\end{equation}
Here:
\begin{itemize}
\item $A_0, \dots, A_k$ is a partition of $\Y$, with $A_i$ a measurable set with $P_0$-null boundary;
\item $B(\textbf{s}_{t_i},\epsilon) = \{(w_0, \dots , w_k) \in \Delta_k : \, w_{(t_i,j)} - \epsilon < w_j < w_{(t_i,j)} + \epsilon, j = 0, \dots , k \}$, with $\Delta_k = \{ (w_0, \dots , w_k): \, w_i \geq 0, i = 0, \dots , k, \sum_{i = 0}^kw_i = 1 \}$ the $k$-simplex.
\item $\textbf{s}_{t_i} = (w_{(t_i,0)}, \dots , w_{(t_i,k)}) = \left( Q_{t_i}(A_0), \dots, Q_{t_i}(A_k) \right) \in \Delta_k$.
\item $Q_{t_i}$, $, i = 1, \dots, N$ is a probability measure absolutely continuous with respect to $P_0$.
\end{itemize}
As is well known, projecting a Dirichlet process $\Pi_\alpha$ on a partition $A_0, \dots, A_k$ yields a $k$-dimensional Dirichlet density $\pi_\alpha$ with parameters $(\alpha(A_0), \dots , \linebreak\alpha(A_k))$. Similarly, projecting a FV process yields a a $k$-dimensional Wright-Fisher (WF) diffusion, which is reversible and stationary with respect to $\pi_\alpha$ (cf.~\cite{Dawson}). Consistently with \eqref{transition}, the transition density of the WF is given by:
\begin{equation}\nonumber
P_t(\textbf{x}, \d\textbf{x}') = \sum_{m = 0}^\infty d_m(t) \sum_{\mm \in \mathbb{Z}^{k+1}_+ : |\mm| = m} \binom{m}{\mm} \textbf{x}^{\mm}\pi_{\alpha+\mm}(\textbf{x}')\d \textbf{x}'.
\end{equation}
Then we can rewrite \eqref{supp} as:
\[
\int_{B(\textbf{s}_{t_1},\epsilon)} \dots \int_{B(\textbf{s}_{t_N},\epsilon)} \pi_\alpha(\textbf{x}_1)P_{t_2-t_1}(\textbf{x}_1, \textbf{x}_2) \dots P_{t_N-t_{N-1}}(\textbf{x}_{N-1}, \textbf{x}_N) \, \d \textbf{x}_1 \dots \d \textbf{x}_N
\]
Since $B(\textbf{s}_{t_1},\epsilon)$ has strictly positive Lebsegue measure, we just need to show that the integrand is strictly bigger than $0$ for any $(\textbf{x}_1, \dots ,\textbf{x}_N) \in B(\textbf{s}_{t_1},\epsilon) \times \dots \times B(\textbf{s}_{t_N},\epsilon)$. Clearly $\pi_\alpha(\textbf{x}_1) > 0$ for any $\textbf{x}_1 \in B(\textbf{s}_{t_1},\epsilon)$. For what concerns $1 < j \leq N$, we have:
\[
P_{t_j-t_{j-1}}(\textbf{x}_{j-1}, \textbf{x}_j) \geq d_0(t_j-t_{j-1}) \pi_\alpha(\textbf{x}_j) > 0, \quad \forall \textbf{x}_j \in B(\textbf{s}_{t_j},\epsilon),
\]
which completes the proof.

\subsection*{Proof of Proposition 2}

Conditioning on the random measure $X_{T+t}$ at time $T+t$ yields
\begin{equation}\label{start}
\begin{aligned}
\P \big(&\,Y_{T+t}^{k+1} \in A \mid \YY_{0:T}, Y_{T+t}^{1:k} \big)\\
=&\, \E \Big[ \P \big(Y_{T+t}^{k+1} \in A 
\mid X_{T+t}, \YY_{0:T}, Y_{T+t}^{1:k} \big) \mid \YY_{0:T}, Y_{T+t}^{1:k} \Big] \\
=&\, \E \Big[ \P \big(Y_{T+t}^{k+1} \in A \mid X_{T+t} \big) \mid \YY_{0:T}, Y_{T+t}^{1:k} \Big] \\
=&\,  \E \Big[ X_{T+t}(A)  \mid \YY_{0:T}, Y_{T+t}^{1:k} \Big] 
\end{aligned}
\end{equation} 
where the second equality follows from the conditional independence of the observations given the signal; cf.~\eqref{model}. From \eqref{oldresult}, eq.~(3.7) in \cite{PRS16} implies that $X_{T+t} \mid \YY_{0}, \dots , \YY_T$ is the mixture of Dirichlet processes
\begin{equation}\nonumber
\sum_{\nn \in L(\MM)}p_t\left(\MM, \nn \right)  \Pi _{\alpha + \sum_{i=1}^{\dist}n_i \delta_{y_i^*}}.
\end{equation} 
 By linearity of the expectation and using \eqref{DPpredictive}, when $k=0$ the RHS of \eqref{start} reads
\[
\begin{aligned}
\sum_{\nn \in L(\MM)}&\,p_t\left(\MM, \nn \right) \E \left[ \Pi _{\alpha + \sum_{i=1}^{\dist}n_i \delta_{y_i^*}} \left( A \right) \right] = \\
=&\, \sum_{\nn \in L(\MM)}p_t\left(\MM, \nn \right) \biggl[ \frac{\theta}{\theta + |\nn|} P_0 \left(A \right) + \frac{|\nn|}{\theta + |\nn|} \sum_{i=1}^{\dist}n_i \delta_{y_i^*} \left(A \right) \biggr] \\
=&\, \sum_{\nn \in L(\MM)}p_t(\MM, \nn )  
\frac{\theta}{\theta + |\nn|}  P_0 \left( A \right) + \sum_{\nn \in L(\MM)}p_t\left(\MM, \nn \right) \frac{|\nn|}{\theta + |\nn|} P_\nn,
\end{aligned}
\]
which is \eqref{predictive 1} with $k = 0$. When $k > 0$, using again \eqref{DPpredictive} and the conjugacy property of mixture of Dirichlet processes, the RHS of \eqref{start} reads
\begin{equation}\nonumber
\begin{aligned}
\E \Bigg[&\, \sum_{\nn \in L(m)}p_{t}(\MM,\nn)\Pi_{\alpha + \sum_{i = 1}^{\dist}n_i\delta_{y_i^*}} \bigg\vert Y_{T+t}^{1:k} \Bigg] \\ 
=&\, \sum_{\nn \in L(\MM)}p^{(k)}_t(\MM,\nn) \E \bigg[ \Pi _{\alpha + \sum_{i=1}^{\dist}n_i \delta_{y_i^*} + \sum_{j=1}^k \delta_{y_j}}\bigg] \\
\end{aligned}
\end{equation} 
yielding \eqref{predictive k+1}.

\subsection*{Proof of Proposition 3}
Denote
\begin{equation}\nonumber
\E_0[Y] = \int y \, P_0(\d x), \quad 
\E_0[Y^2] = \int y^2 \, P_0(\d x).
\end{equation} 
We need to compute
\[
\text{Corr}(Y_{t},Y_{t+s}) = \frac{\text{Cov}(Y_{t},Y_{t+s})}{\E_0[Y^2]-\E_0^2[Y] } = \frac{\E[Y_tY_{t+s}]-\E_0^2[Y]}{\E_0[Y^2]-\E_0^2[Y] },
\]
where
\[
\E[Y_tY_{t+s}] = \int y_ty_{t+s} \, P(\d y_t, \d y_{t+s}).
\]
From Proposition \ref{prop: main}, we can write the joint distribution using the chain rule, yielding
\[
P(\d y_t, \d y_{t+s}) = P_0(\d y_t)\left[\left(1-e^{-\frac{\theta}{2}s}\right)P_0(\d y_{t+s}) + \frac{\theta e^{-\frac{\theta}{2}s}}{\theta+1}P_0(\d y_{t+s}) + \frac{e^{-\frac{\theta}{2}s}}{\theta+1}\delta_{y_t}(\d y_{t+s})  \right]
\]
which in turn gives
\[
\E[Y_tY_{t+s}] = \left(1-e^{-\frac{\theta}{2}s}+\frac{\theta e^{-\frac{\theta}{2}s}}{\theta+1}  \right)\E_0^2[Y] + \frac{e^{-\frac{\theta}{2}s}}{\theta+1}\E_0[Y^2].
\]
Consequently
\[
\text{Cov}(Y_{t},Y_{t+s}) = \frac{e^{-\frac{\theta}{2}s}}{\theta+1}\left(\E_0[Y^2]-\E_0^2[Y] \right)
\]
from which the result follows.

\subsection*{Proof of Proposition 4}

Denote by $P_{0,k}$ the RHS of \eqref{DPpredictive}. We have to prove that
\begin{equation}\label{predictive convergence}
\left \lvert \P \big( Y_{T+t}^{k+1} \in A \mid \YY_{0:T}, Y_{T+t}^{1:k} \big) - P_{0,k}(A) \right \rvert \to 0, \quad \forall A \in \mathcal{B}\left(\mathcal{Y}\right)
\end{equation} 
as $t \to \infty$. Using the triangle inequality the LHS of \eqref{predictive convergence} is smaller than:
\[
\left \lvert \sum_{\textbf{n} \in L(\MM)} p_t^{(k)}(\MM, \nn)\frac{-|\textbf{n}|}{\theta+ |\textbf{n}|+k}P_{0,k}(A) \right\rvert+
\left \lvert \sum_{\textbf{n} \in L(\MM)} p_t^{(k)}(\MM, \nn)\frac{|\textbf{n}|}{\theta+ |\textbf{n}|+k}P_\textbf{n}(A) \right \rvert
\]
Note now that the time-dependence of \eqref{predictive k+1} is ultimately due to $p_{|\mm|, |\nn|}(t)$ in \eqref{death process transitions}. These are the transition probabilities of a one-dimensional death process on $\mathbb{Z}_{+}$ which jumps from $m$ to $m-1$ at infinitesimal rate $\lambda_m = m(\theta+m-1)/2$. It can be easily verified that, as $t \to \infty$, we have $p_{|\mm|, 0}(t)\rightarrow 1$ for any $\mm$ and $p_{|\mm|, |\nn|}(t)\rightarrow 0$ for any $\oo<\nn\le \mm$, and similar statement holds for \eqref{death process transitions from M}. Then, denoting $B_{1},B_{2}$ the two sums in the previous display respectively, we have
\[
0 \leq \max\{B_{1},B_2\} \leq \sum_{\textbf{n} \in L(\MM)}p_t^{(k)}(\MM, \nn)\frac{|\textbf{n}|}{\theta+ |\textbf{n}|+k}  \to 0
\]
which implies \eqref{predictive convergence}, as desired.
%

\subsection*{Proof of Proposition 5}

By de Finetti's Representation Theorem $P_k \to P^*$ as $k \to \infty$, with $P^{*}$ being the De Finetti measure of the sequence $\left(Y_{T+t}^k \right)_{k \geq 1}$. Moreover, recalling that $L(\MM)$ is a finite set, we have:
\[
\lim _{k \to \infty}\sum_{\nn \in L(\MM)}p^{(k)}_t(\MM,\nn)  \frac{k}{\theta + |\nn| + k } = \lim _{k \to \infty}\sum_{\nn \in L(\MM)}p^{(k)}_t(\MM,\nn) = 1
\]
 As regards the other two components of \eqref{predictive k+1} we have
\begin{equation}\nonumber
0 \leq \lim _{k \to \infty}\sum_{\nn \in L(\MM)}p^{(k)}_t(\MM,\nn)  \frac{1}{\theta + |\nn| + k } \leq \lim _{k \to \infty} \sum_{\nn \in L(\MM)} \frac{1}{\theta + |\nn| + k } = 0 
\end{equation}
and we have the result.

\end{document}